\def\@seccntformat#1{\hspace{0em}}
\def\section{\@startsection{section}{1}{0pt}%
    {-1.0ex plus -.5ex minus -.2ex}%
    {0.5ex plus .2ex}%
    {\normalfont\bfseries}}
\begin{document}
\title{Circuit Knitting for Continuous-Variable Quantum States}
\author{Shao-Hua Hu \orcidlink{0009-0001-9987-5020} }
\email{shhphy@gmail.com}
\affiliation{Department of Physics, National Tsing Hua University, Hsinchu 30013, Taiwan}
\author{Ray-Kuang Lee \orcidlink{0000-0002-7171-7274} }
\email{rklee@ee.nthu.edu.tw}
\affiliation{Department of Physics, National Tsing Hua University, Hsinchu 30013, Taiwan}
\affiliation{Institute of Photonics Technologies, National Tsing Hua University, Hsinchu 30013, Taiwan}
\affiliation{Department of Electrical Engineering, National Tsing Hua University, Hsinchu 30013, Taiwan}

\begin{abstract}
In finite-dimensional systems, circuit knitting can be used to simulate non-classical quantum operations using a limited set of resources.
In this work, we extend circuit knitting techniques to infinite-dimensional quantum systems.
We develop a general theoretical framework for simulating non-Gaussian states from the given set of available states.
Also, we establish fundamental constraints with the no-go theorem on the circuit knitting of multi-mode Gaussian operations, by showing that the exact knitting with separable operations requires infinite sampling overhead.
We further explore several applications of our theory, including simulation of approximate Fock states, GKP state generation, and cat-state amplification.
\end{abstract}
\maketitle

\paragraph{Introduction}
The physical realization of quantum computing is one of the most challenging problems in many scenarios. 
In finite-dimensional quantum computing, circuit knitting \cite{Peng2020, Mitarai2021, Piveteau2024} has been employed to simulate large circuits on smaller quantum processors.
Circuit knitting is interesting for two reasons: One is from the practical point of view, where device requirements can be relaxed.
Second, from the fundamental point of view, circuit knitting provides a method for simulating complex quantum systems from simpler ones.
The core idea behind this is the quasi-probability decomposition (QPD) \cite{Pashayan2015, Mitarai2021}, in which we can decompose a single non-local gate as an affine combination of the local operations.
Negative weights in the decomposition can be simulated via post-processing on the measurement outcome.
Consequently, this approach introduces additional sampling complexity that scales exponentially with the number of non-local gates.
Despite this limitation, circuit knitting remains a powerful tool that enhances our ability to manipulate the quantum system.
Other than simulating non-local gates, similar concepts have also been investigated, like error mitigation \cite{Cai2023, Tsubouchi2023} or virtual distillation \cite{Yuan2024, Takagi2024}.

In continuous-variable (infinite-dimensional) systems, Gaussian states and operations \cite{Weedbrook2012} are widely available in various physical platforms.
However, non-Gaussian states and operations are crucial in various quantum information tasks, like entanglement distillation \cite{Fiurasek2002, Eisert2002, Giedke2002}, computational speed-up \cite{Lloyd1999, Bartlett2002, Mari2012}, and more \cite{Niset2009, Serafini2020}.
Recently, some of the works have already adopted the QPD technique into the continuous variable system, like the error mitigation in bosonic circuits \cite{Teo2025} and increasing the squeezing level of the continuous variable state \cite{Anai2024, Endo2024}.
In this work, we aim to use circuit knitting to prepare a general non-Gaussian state and give several concrete analyses on the QPD and the sampling overhead.
Furthermore, we also study the circuit knitting of the non-local Gaussian operation, and show a no-go theorem on the circuit knitting of the Gaussian unitary.

\bigskip
\paragraph*{Quasi-probability decomposition of continuous variable state}\label{sec::QPD-in-CV}
In this work, we consider the infinite-dimensional Hilbert space $\mathcal{H}$ equipped with the canonical position and momentum operators $\hat{Q}$ and $\hat{P}$, which satisfy the commutation relation $[\hat{Q},\hat{P}] = i\hat{\mathbb{I}}$ (here we set $\hbar = 1$). 
From these, we define the annihilation and creation operators $\hat{a} = \tfrac{1}{\sqrt{2}}(\hat{Q}+i\hat{P}), 
\quad 
\hat{a}^\dagger = \tfrac{1}{\sqrt{2}}(\hat{Q}-i\hat{P})$,
and the number operator $\hat{N} = \hat{a}^\dagger \hat{a}$, whose eigenbasis is the Fock basis $\{\ketv{n}_N : n \in \mathbb{N}_0\}$. 
Furthermore, two types of unitary operators will be used for the latter discussion, namely, the \textbf{displacement operator} $\hat{D}(\alpha) = \exp(\alpha \hat{a}^\dagger - \alpha^* \hat{a})$, and the \textbf{squeezing operator} $\hat{S}(\zeta) = \exp\!\left(\tfrac{\zeta}{2}\hat{a}^{\dagger 2} - \tfrac{\zeta^*}{2}\hat{a}^2\right)$.
We denote by $\mathcal{D}(\mathcal{H})$ the set of density operators on $\mathcal{H}$, i.e., all positive semi-definite $(\hat{\rho}\succeq 0)$ and normalized $(\operatorname{Tr}\hat{\rho} = 1)$ operators. 
For a pure state $\ketv{\psi} \in \mathcal{H}$, we will use the shorthand notation $\hat{\psi} = \proj{\psi} \in \mathcal{D}(\mathcal{H})$.

We formally define QPD as follows:
\begin{definition}\label{def_QPD}
    Let $\mathbb{F}\subset\mathcal{H}$ be the closed and convex set of available states.
    The QPD of state $\hat{\psi}\in\mathcal{H}$ over $\mathbb{F}$ is a set $\{(q_x,\hat{\rho}_x)\in(\mathbb{R},\mathbb{F})\}$, s.t. $\hat{\psi} = \sum_{x\in X}q_x\hat{\rho}_x$ with $\bar{\gamma} = \sum_{x\in X}|q_x|<\infty$.
\end{definition}
In short, the QPD can be treated as the taking ensemble average over some quasi-probability distribution (since $\sum_{x\in X}q_x=1$), where conditions of positive probability are released.
Those negative weights can be reproduced by post-processing.
In particular, a Monte Carlo simulation protocol is given as follows:
First, prepare the state $\hat{\rho}_x$ with the probability $P_x= \frac{|q_x|}{\bar{\gamma}}$.
Then, obtain the measurement outcome $\trace{\hat{O}\hat{\rho}_x}$ for any desired observable $\hat{O}$.
Finally, we re-weight this outcome by the factor $\frac{q_x}{|q_x|}$.
Hence for any observable $\hat{O}$, we have $\sum_{x\in X}\;\frac{q_x}{|q_x|}\trace{\hat{O}\frac{|q_x|}{\bar{\gamma}}\hat{\rho}_x} = \left(\bar{\gamma}\right)^{-1}\trace{\hat{O}\hat{\psi}}$.
So once the QPD of $\hat{\psi}$ is given, we may use it to estimate the outcome statistics of the target state $\hat{\psi}$, with respect to any observable $\hat{O}$.
A carton of this protocol is illustrated in FIG.\ref{fig: intro circuit kintting}.
\begin{figure}[htb]
    \centering
    \includegraphics[width=1.0\linewidth]{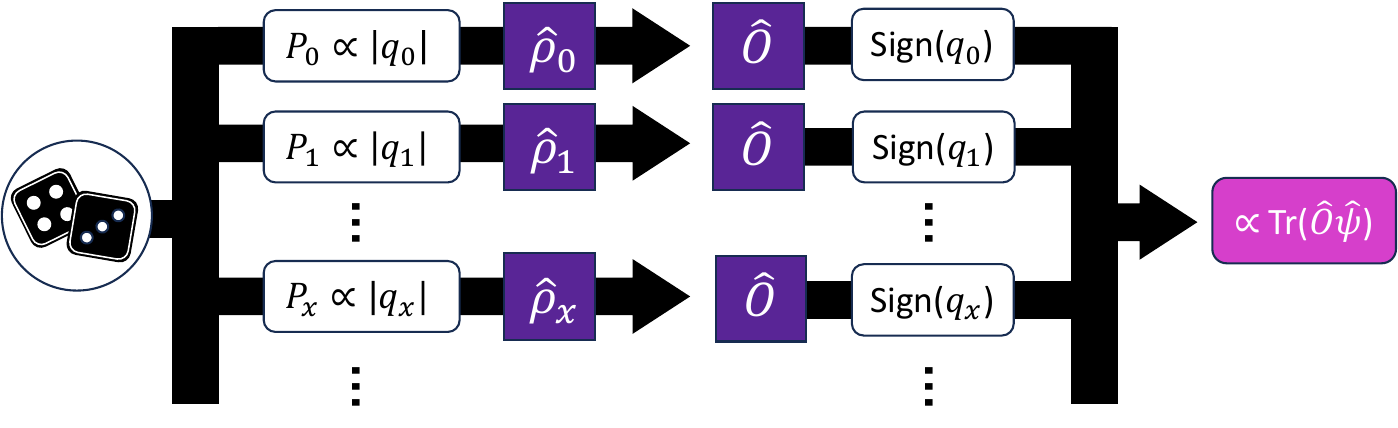}
    \caption{
    This figure illustrates the simulation protocol that allows us to estimate the outcome statistics of the target state $\hat{\psi}$, using only the available states in $\mathbb{F}$ with post-processing.
    }
    \label{fig: intro circuit kintting}
\end{figure}

But there is no free lunch in real life, and this approach does have a consequence.
The additional factor $\bar{\gamma}$ plays the role of sampling overhead.
More precisely, one needs $\bar{\gamma}^2$ times more additional sampling numbers to achieve the same statistical accuracy \cite{Pashayan2015, Peng2020, Mitarai2021}.
Hence, the value $\bar{\gamma}$ should be minimized for the state we are interested in, so one can define the optimal sampling overhead as 
\begin{align}
    \gamma_\mathbb{F}(\hat{\psi}) := \lim_{\epsilon\rightarrow 0^+} \inf_{(q_x,\hat{\rho}_x)} \left\{
\sum_{x\in X} |q_x|
\;\Big|\;
\onorm{\sum_{x\in X}q_x\hat{\rho}_x - \hat{\psi}}{1} < \epsilon
\right\},
\end{align}
were the 1-norm are defined as $\onorm{\hat{A}}{1} = \trace{\sqrt{\hat{A}^\dag\hat{A}}}$.
In the language of resource theory, the optimal sampling overhead is equivalent to the standard (free) robustness \cite{Chitambar2019, Lami2021}, which is a well-defined resource measure.
We remark that the QPD over $\mathbb{F}$ does not generally exist for all $\hat{\psi}$.
So here, when the optimal sampling overhead diverges, it is sure that the QPD does not exist, and vice versa.

\bigskip
\paragraph{Single-mode non-Gaussian state}
In the single-mode scenario, the most available state should be the coherent state $\ketv{ \alpha }:= \hat{D}(\alpha)\ketv{0}_N$.
We denote the classical states $\mathcal{C}_1$ as the closure of the convex hull of the coherent states \cite{Lami2021}.
\begin{theorem}\label{thm_no-go thm for CQPD}
    For any pure state $\hat{\psi}\notin \mathcal{C}_1$, the QPD over $\mathcal{C}_1$ does not exist.
\end{theorem}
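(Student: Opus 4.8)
The plan is to prove the contrapositive: if $\hat{\psi}=\proj{\psi}$ admits a QPD over $\mathcal{C}_1$, then $\ketv{\psi}$ is a coherent state (hence $\hat{\psi}\in\mathcal{C}_1$). First I would convert the hypothesis into an integral representation. If $\{(q_x,\hat{\rho}_x)\}$ is a QPD of $\hat{\psi}$ over $\mathcal{C}_1$ with $\bar{\gamma}=\sum_x|q_x|<\infty$, then each $\hat{\rho}_x\in\mathcal{C}_1$ is a coherent-state mixture $\hat{\rho}_x=\int_{\mathbb{C}}\proj{\alpha}\,d\mu_x(\alpha)$ for a Borel probability measure $\mu_x$ (the defining characterization of $\mathcal{C}_1$, cf.\ Ref.~\cite{Lami2021}); since the series converges absolutely in trace norm and $\|\int\proj{\alpha}\,d\rho(\alpha)\|_1\le\|\rho\|_{\mathrm{TV}}$, the finite signed measure $\nu:=\sum_x q_x\mu_x$ satisfies $\|\nu\|_{\mathrm{TV}}\le\bar{\gamma}$ and $\hat{\psi}=\int_{\mathbb{C}}\proj{\alpha}\,d\nu(\alpha)$. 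Taking coherent-state matrix elements, using $\langle\beta|\alpha\rangle=e^{-|\beta|^2/2-|\alpha|^2/2+\beta^*\alpha}$ and purity $\langle\beta|\hat{\psi}|\beta'\rangle=\langle\beta|\psi\rangle\langle\psi|\beta'\rangle$, and cancelling the Gaussian prefactors, I obtain
\begin{equation}
F(u)\,\widetilde{F}(v)=\int_{\mathbb{C}}e^{-|\alpha|^2+u\alpha+\bar{\alpha}v}\,d\nu(\alpha),\qquad u,v\in\mathbb{C},
\end{equation}
where $F(z)=\sum_n c_n z^n/\sqrt{n!}$ is the Bargmann function of $\ketv{\psi}=\sum_n c_n\ketv{n}_N$ and $\widetilde{F}(z):=\overline{F(\bar{z})}=\sum_n\bar{c}_n z^n/\sqrt{n!}$; the identity holds for all $(u,v)\in\mathbb{C}^2$ because $\beta,\beta'$ vary independently.

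The crucial move is the substitution $v=-\bar{u}$: then $u\alpha+\bar{\alpha}v=u\alpha-\overline{u\alpha}$ is purely imaginary, so the integrand has modulus $e^{-|\alpha|^2}$ and $|F(u)\widetilde{F}(-\bar{u})|\le\int_{\mathbb{C}}e^{-|\alpha|^2}\,d|\nu|(\alpha)\le\bar{\gamma}$ for all $u$. Since $\widetilde{F}(-\bar{u})=\overline{F(-u)}$, the entire function $\Phi(u):=F(u)F(-u)$ is bounded, hence constant by Liouville: $F(u)F(-u)\equiv F(0)^2$. By displacing $\ketv{\psi}\mapsto\hat{D}(-\beta_0)\ketv{\psi}$ --- which preserves purity, $\mathcal{C}_1$-membership, and the finite-measure representation (the measure is merely translated) --- I may assume $F(0)=\langle0|\psi\rangle\neq0$, so $F$ is zero-free and $F=e^{h}$ for an entire $h$; then $\Phi\equiv\text{const}$ forces the even part of $h$ to be constant, i.e.\ $h(u)=a_0+h_{\mathrm{o}}(u)$ with $h_{\mathrm{o}}$ odd. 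Finally, since $\ketv{\psi}\in\mathcal{H}$, $F$ lies in the Fock--Bargmann space, and there $e^{h}$ (for entire $h$) lies in the space only if $h$ is a polynomial of degree $\le2$; being odd, $h_{\mathrm{o}}(u)=\beta u$, so $F(u)=e^{a_0+\beta u}$, the Bargmann function of $\ketv{\beta}$ (with $|e^{a_0}|=e^{-|\beta|^2/2}$ enforced by normalization). Thus $\hat{\psi}=\proj{\beta}\in\mathcal{C}_1$, the desired contradiction.

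The main obstacle is isolating the substitution $v=-\bar{u}$ that kills the exponent and reduces everything to Liouville's theorem; everything before it is bookkeeping, and everything after it is standard complex analysis. The one external input I would want to cite (or dispatch with a one-line estimate) is the fact that $e^{h}$ lies in the Fock--Bargmann space, with $h$ entire, only if $h$ is a polynomial of degree at most two --- this follows because $|F(u)|^2e^{-|u|^2}=e^{2\,\mathrm{Re}\,h(u)-|u|^2}$ must be integrable over $\mathbb{C}$, which fails as soon as $\mathrm{Re}\,h$ grows faster than $|u|^2$ on any sector of positive angular measure. A secondary technical point is justifying the termwise manipulations (interchanging $\sum_x q_x$ with $\int\cdot\,d\mu_x$, and evaluating matrix elements under the integral), all of which follow from absolute convergence in trace norm together with the bound $\|\int\proj{\alpha}\,d\rho\|_1\le\|\rho\|_{\mathrm{TV}}$.
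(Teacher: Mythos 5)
Your proof is correct in its essentials, but it takes a genuinely different route from the paper's: the paper disposes of Theorem~1 in one line by citing Ref.~\cite{Lami2021} for the divergence of the standard robustness of all pure non-classical states, whereas you give a direct, self-contained argument for the contrapositive. Your route --- package the hypothetical QPD into a finite signed Borel measure $\nu$ with $\hat{\psi}=\int\proj{\alpha}\,d\nu(\alpha)$, pass to Bargmann functions, make the substitution $v=-\bar u$ so the exponent becomes purely imaginary, then invoke Liouville together with the classification of zero-free elements of the Fock--Bargmann space --- is a Hudson-theorem-style argument; it is elementary, makes no appeal to resource-theoretic machinery, and shows exactly where classicality of $\ketv{\psi}$ is forced. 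Three remarks. (i) What you prove is that no \emph{exact} QPD exists, which is precisely what Definition~1 and the theorem ask for; the cited result is strictly stronger, since divergence of the robustness also excludes $\epsilon$-approximate decompositions with uniformly bounded overhead. (ii) The identification of every element of $\mathcal{C}_1$ (a trace-norm closure) with $\int\proj{\alpha}\,d\mu(\alpha)$ for a Borel \emph{probability} measure is true but not automatic --- mass could a priori escape to infinity in the closure --- so that step genuinely leans on the citation. (iii) Your justification that $e^{h}$ in the Fock--Bargmann space forces $\deg h\le 2$ via integrability ``on a sector'' is loose for transcendental $h$; the clean one-liner is the reproducing-kernel bound $|F(u)|\le\|F\|\,e^{|u|^2/2}$, which gives $\mathrm{Re}\,h(u)\le |u|^2/2+C$ everywhere and hence $\deg h\le 2$ by Borel--Carath\'eodory. (Incidentally, the case $F(0)=0$ cannot occur at all, since $F(u)F(-u)\equiv 0$ is impossible for two not-identically-zero entire functions, so the displacement step, while harmless, is not needed.)
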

\begin{proof}
It has been shown in \cite{Lami2021} that the standard robustness diverges for all pure non-classical states.
\end{proof}
\noindent
Another way of thinking about the theorem~\ref{thm_no-go thm for CQPD} is that, for the non-classical pure state, its P function can't be written into a quasi-distribution, so the QPD over classical states will not exist.
This result leads us to consider the larger sets of the available states, and the choice of the Gaussian states naturally arises.
We denote a pure Gaussian state by the corresponding displacement and squeezing operator, i.e. $\ketv{\alpha,\zeta}:= \hat{D}(\alpha)\hat{S}(\zeta)\ketv{0}_N$.
Then we take the closure of the convex hull of Gaussian states and denote it as $\mathcal{G}_1$.
However, we are unable to construct an explicit QPD for a pure non-Gaussian state over $\mathcal{G}_1$, nor prove the divergence of the standard robustness.
Hence, we make it as the conjecture~\ref{con_nogo-thm-NGQPD}.
\begin{conjecture}\label{con_nogo-thm-NGQPD}
    Let $\hat{\psi}\notin \mathcal{G}_1$ be a pure state, then the QPD over $\mathcal{G}_1$ does not exist.
\end{conjecture}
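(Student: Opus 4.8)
The plan is to prove the equivalent assertion that the optimal sampling overhead $\gamma_{\mathbb{F}}=\gamma_{\mathcal{G}_1}(\hat\psi)$ diverges for every pure non-Gaussian $\hat\psi$, using the identification of $\gamma_{\mathbb{F}}$ with the standard robustness. First I would unfold a hypothetical QPD: if $\hat\psi=\sum_{x\in X}q_x\hat\rho_x$ with $\hat\rho_x\in\mathcal{G}_1$ and $\bar\gamma=\sum_x|q_x|<\infty$, then, since each $\hat\rho_x$ is a convex mixture of pure Gaussian states $\ketv{\alpha,\zeta}$ (or a trace-norm limit of such), $\hat\psi$ would be, morally, a finite-total-variation signed mixture of pure Gaussian states --- a signed ``Gaussian $P$-representation''. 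The target then becomes: \emph{no pure non-Gaussian state admits such a representation}. This is the exact Gaussian counterpart of the fact behind the classical no-go theorem above --- that the $P$-function of a pure non-classical state is too singular to be a finite signed measure \cite{Lami2021} --- so the natural strategy is to transport that argument to the Gaussian orbit.

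One unconditional obstruction is already available via characteristic functions. Every pure Gaussian state has covariance matrix of fixed determinant, so its Wigner characteristic function $\chi_{\alpha,\zeta}$ satisfies $\int_{\mathbb{R}^2}|\chi_{\alpha,\zeta}(\xi)|\,d^2\xi=\kappa$ for a universal constant $\kappa$, and this bound $\int|\chi_{\hat\rho}|\le\kappa$ passes to convex hulls and to the closure, hence holds for every $\hat\rho\in\mathcal{G}_1$. Since $\chi_{\hat\psi}=\sum_x q_x\chi_{\hat\rho_x}$, we get $\int_{\mathbb{R}^2}|\chi_{\hat\psi}(\xi)|\,d^2\xi\le\kappa\bar\gamma<\infty$. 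Thus any pure state whose characteristic function is not in $L^1$ has no QPD over $\mathcal{G}_1$, which already disposes of a broad class of non-Gaussian states.

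The hard core is the complementary case $\chi_{\hat\psi}\in L^1$ --- most conspicuously the Fock states $\ketv{n}_N$, whose characteristic functions decay as a polynomial times a Gaussian and are therefore integrable, so the bound above is silent. Here I would attempt a phase-space ``tail/uncertainty'' argument: a pure Gaussian Wigner function is a minimal-area bump, so along every direction its Gaussian tail decays no faster than that of a suitably displaced coherent state, and a finite-total-variation signed mixture ought to inherit a lower bound on its tail-decay rate that a generic non-Gaussian pure state --- for instance, one with a Hermite--Gaussian profile --- violates. The hard part will be making this rigorous: one must control cancellations among heavily squeezed components carrying arbitrarily small weight and convert a pointwise tail mismatch into a genuine contradiction with the finiteness of $\bar\gamma$. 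A plausible alternative is to transport a ``stellar-rank''-type invariant, monotone under Gaussian channels, to signed operator combinations and argue that no finite signed mixture of stellar-rank-zero (i.e.\ Gaussian) states can carry nonzero stellar rank; but the behaviour of this invariant under signed combinations is not presently understood, which is exactly why the statement is left as a conjecture.
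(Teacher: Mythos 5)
The statement you were asked to prove is presented in the paper only as a conjecture: the authors explicitly say they can neither construct a QPD of a pure non-Gaussian state over $\mathcal{G}_1$ nor prove divergence of the standard robustness, so there is no proof in the paper to compare against. Your proposal is likewise not a proof, and you say so yourself. What you do establish is a correct partial obstruction: a pure Gaussian state has a Wigner characteristic function with a fixed $L^1$ norm $\kappa$ (its Gaussian envelope has unit-determinant covariance, independent of displacement and squeezing), the bound $\int|\chi_{\hat\rho}|\le\kappa$ survives convex combination and trace-norm closure (trace-norm convergence gives pointwise convergence of characteristic functions, then Fatou), and absolute summability of the $q_x$ pushes it through the decomposition, so any pure state with $\chi_{\hat\psi}\notin L^1$ admits no QPD over $\mathcal{G}_1$. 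That is a genuine, checkable result that goes beyond what the paper records.

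The gap is that this argument is silent precisely on the states one most cares about. Every finite superposition of Fock states, and more generally every finite-stellar-rank state, has a characteristic function of the form polynomial times Gaussian, which is in $L^1$; so your unconditional bound excludes none of them, and in particular does not rule out a QPD of $\ketv{1}_N$ over $\mathcal{G}_1$ --- the minimal instance the conjecture must cover, and the one on which the paper's corollary about Hudson's theorem for mixed states hinges. Your two suggested routes for the complementary case are only heuristics: the tail-decay idea founders exactly where you say it does, on cancellations among highly squeezed components carrying arbitrarily small weight (a signed combination can have tails decaying faster than any of its constituents), and the stellar-rank route requires a monotonicity statement under signed affine combinations that is not known and is arguably as hard as the conjecture itself. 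The proposal should therefore be read as a correct partial result plus a research programme; the conjecture remains open after it.
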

We notice that conjecture \ref{con_nogo-thm-NGQPD} directly leads to the fact that Hudson's theorem \cite{Hudson1974} does not hold for mixed states \cite{Broecker1995, Chabaud2021, Mitarai2021a}.
\begin{corollary}(\;$\mathcal{G}_1\subsetneq \mathcal{W}_+$\;)
There exists a mixed state with a positive Wigner function, but it can't be written as a convex combination of the Gaussian state.
\end{corollary}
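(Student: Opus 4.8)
\emph{Proof proposal.}\ The plan is to exhibit one explicit mixed state lying in $\mathcal{W}_+$ but outside $\mathcal{G}_1$, deducing the second property from Conjecture~\ref{con_nogo-thm-NGQPD}. The natural witness is the balanced mixture of vacuum and single photon,
\[
\hat{\sigma}\;:=\;\tfrac{1}{2}\,(\proj{0}+\proj{1}),
\]
where $\proj{0},\proj{1}$ denote the projectors onto the Fock states $\ketv{0}_N,\ketv{1}_N$. First I would record the elementary Wigner calculation: with the vacuum and one-photon Wigner functions $W_{0}(q,p)=\tfrac{1}{\pi}e^{-(q^2+p^2)}$ and $W_{1}(q,p)=\tfrac{1}{\pi}(2(q^2+p^2)-1)\,e^{-(q^2+p^2)}$, one finds
\[
W_{\hat{\sigma}}(q,p)=\tfrac{1}{\pi}\,(q^2+p^2)\,e^{-(q^2+p^2)}\;\ge\;0 ,
\]
so $\hat{\sigma}\in\mathcal{W}_+$, and since $\hat{\sigma}$ has rank two, it is genuinely mixed. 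The whole claim therefore reduces to showing $\hat{\sigma}\notin\mathcal{G}_1$.

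For this I would argue by contradiction. Assume $\hat{\sigma}\in\mathcal{G}_1$. The vacuum is Gaussian, hence $\proj{0}\in\mathcal{G}_1$ as well, and the affine identity
\[
\proj{1}\;=\;2\,\hat{\sigma}\;-\;\proj{0}
\]
is then, in the sense of Definition~\ref{def_QPD}, a QPD of $\proj{1}$ over $\mathcal{G}_1$: the weights $q_1=2,\ q_2=-1$ sum to $1$ and give $\bar{\gamma}=|q_1|+|q_2|=3<\infty$. On the other hand $W_{1}(0,0)=-\tfrac{1}{\pi}<0$, so $\proj{1}\notin\mathcal{W}_+\supseteq\mathcal{G}_1$; here the inclusion $\mathcal{G}_1\subseteq\mathcal{W}_+$ is immediate, since every pure Gaussian state has a strictly positive Gaussian Wigner function and both convex mixing and trace-norm limits preserve pointwise non-negativity. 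Thus $\proj{1}$ is a pure non-Gaussian state, so Conjecture~\ref{con_nogo-thm-NGQPD} forbids any QPD of $\proj{1}$ over $\mathcal{G}_1$, contradicting the display above. Hence $\hat{\sigma}\notin\mathcal{G}_1$, and with $\hat{\sigma}\in\mathcal{W}_+$ this yields $\mathcal{G}_1\subsetneq\mathcal{W}_+$.

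Granting the conjecture I do not expect a genuine obstacle: the Wigner computation and the inclusion $\mathcal{G}_1\subseteq\mathcal{W}_+$ are routine, and all the leverage comes from a single cheap ($\bar{\gamma}=3$) affine combination that converts the mixed witness $\hat{\sigma}$ into the pure non-Gaussian $\proj{1}$, where the conjecture applies; the real difficulty is thereby absorbed into Conjecture~\ref{con_nogo-thm-NGQPD}. An \emph{unconditional} proof would instead have to certify $\hat{\sigma}\notin\mathcal{G}_1$ directly --- for instance by showing that the only pure Gaussian state supported on $\mathrm{span}\{\ketv{0}_N,\ketv{1}_N\}$ is the vacuum and then lifting this from the convex hull to its closure --- and the hard step there is precisely that closure argument, which needs a tightness estimate on the Gaussian states carrying the weight. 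This non-membership is the known failure of Hudson's theorem for mixed states \cite{Broecker1995,Chabaud2021,Mitarai2021a}, and the fact that it is already known supports Conjecture~\ref{con_nogo-thm-NGQPD}.
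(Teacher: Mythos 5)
Your proposal is correct and follows essentially the same route as the paper: the paper's proof uses the mixture $\hat{\rho}_{p}=(1-p)\proj{1}_N+p\proj{0}_N$ with some $p_0<1$ making the Wigner function positive, and derives the same contradiction with Conjecture~\ref{con_nogo-thm-NGQPD} by solving for $\proj{1}_N$ as an affine combination of Gaussian states. Your version simply instantiates $p_0=\tfrac12$ and spells out the Wigner computation and the inclusion $\mathcal{G}_1\subseteq\mathcal{W}_+$, which the paper leaves implicit.
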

\begin{proof}
    Set $\hat{\rho}_p = (1-p)\proj{1}_N + p \proj{0}_N$, then $\exists p_0 <1$, such that $\hat{\rho}_{p_0}$ has positive Wigner function.
    If Hudson's theorem also holds for $\hat{\rho}_{p_0}$, we have
    $\hat{\rho}_{p_0} = \sum_{x\in X} P_x\hat{g}_x$.
    Thus $\proj{1}_N = \frac{1}{1-p_0}(\sum_{x\in X} P_x\hat{g}_x-p_0\proj{0}_N)$, which contradiction to the conjecture \ref{con_nogo-thm-NGQPD}.
\end{proof}

Now, we extend the available state into the (displaced) cat-states, in which we define $|Cat(\alpha, \beta,\theta)\rangle = \frac{1}{\sqrt{\mathcal{N}(\alpha,\beta,\theta)}}(|\alpha\rangle + e^{i\theta}|\beta\rangle)$ as the (displaced) cat-states, with the normalized constants $\mathcal{N}(\alpha,\beta,\theta) = 2+(e^{i\theta}e^{\alpha^*\beta}+e^{-i\theta}e^{\alpha\beta^*})e^{\frac{-|\alpha|^2-|\beta|^2}{2}}$.
We denoted the closure of the convex hull of the cat states as $\mathcal{C}_2$.
\begin{theorem}\label{thm_c2QPD}
The QPD over $\mathcal{C}_2$ exists for a pure state $|\psi\rangle$, if $\int_{\mathbb{C}}d^2\alpha\;|\langle\alpha|\psi\rangle|<\infty$.
Furthermore, the corresponding QPD that achieves this overhead is given by
\begin{align}
    &|\psi\rangle\langle\psi| =  \frac{1}{4\pi^2}\int_{\mathbb{C}^2}d^2\alpha d^2\beta\;|\langle \alpha|\psi\rangle\langle\psi|\beta\rangle|\notag\\
   &\times\left(\mathcal{N}(\alpha,\beta,\phi)\hat
    {C}_{\alpha,\beta,\phi} -\mathcal{N}(\alpha,\beta,\phi+\pi)\hat{C}_{\alpha,\beta,\phi+\pi}\right)\label{eq_c2_QPD}.
\end{align}
Here, $\hat{C}_{\alpha,\beta,\phi} := \proj{Cat(\alpha,\beta,\phi)}$ and $\langle \alpha|\psi\rangle\langle\psi|\beta\rangle = |\langle \alpha|\psi\rangle\langle\psi|\beta\rangle|e^{i\phi}$.
\end{theorem}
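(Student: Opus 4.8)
The plan is to read off Eq.~(\ref{eq_c2_QPD}) from the over-completeness of the coherent states, check that it is an admissible signed decomposition, and then compute that the total weight it produces equals $\tfrac{1}{\pi^2}\big(\int_{\mathbb{C}}d^2\alpha\,|\langle\alpha|\psi\rangle|\big)^2$, so that the stated integrability condition is exactly what makes it finite. Concretely, I would first insert the resolution of identity $\hat{\mathbb{I}}=\tfrac{1}{\pi}\int_{\mathbb{C}}d^2\alpha\,\proj{\alpha}$ on both sides of $\hat{\psi}=\proj{\psi}$ to obtain the operator identity $\hat{\psi}=\tfrac{1}{\pi^2}\int_{\mathbb{C}^2}d^2\alpha\,d^2\beta\;\langle\alpha|\psi\rangle\langle\psi|\beta\rangle\,|\alpha\rangle\langle\beta|$. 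Under the hypothesis this is an absolutely convergent Bochner integral in trace norm, since $\onorm{|\alpha\rangle\langle\beta|}{1}=1$ and $\int_{\mathbb{C}^2}d^2\alpha\,d^2\beta\,|\langle\alpha|\psi\rangle\langle\psi|\beta\rangle|=\big(\int_{\mathbb{C}}d^2\alpha\,|\langle\alpha|\psi\rangle|\big)^2<\infty$; this is the only place the assumption enters.

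Next I would trade the non-Hermitian rank-one kernels for cat-state projectors. A direct expansion gives $\mathcal{N}(\alpha,\beta,\theta)\hat{C}_{\alpha,\beta,\theta}=\proj{\alpha}+\proj{\beta}+e^{-i\theta}|\alpha\rangle\langle\beta|+e^{i\theta}|\beta\rangle\langle\alpha|$, so the diagonal (coherent-state) parts cancel in the difference and $\mathcal{N}(\alpha,\beta,\theta)\hat{C}_{\alpha,\beta,\theta}-\mathcal{N}(\alpha,\beta,\theta+\pi)\hat{C}_{\alpha,\beta,\theta+\pi}=2\big(e^{-i\theta}|\alpha\rangle\langle\beta|+e^{i\theta}|\beta\rangle\langle\alpha|\big)$. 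Since $\hat{\psi}$ is Hermitian, the integrand at $(\beta,\alpha)$ is the Hermitian conjugate of that at $(\alpha,\beta)$, so the integral is unchanged if the integrand is replaced by its symmetrization over $\alpha\leftrightarrow\beta$, which, writing $\langle\alpha|\psi\rangle\langle\psi|\beta\rangle=|\langle\alpha|\psi\rangle\langle\psi|\beta\rangle|e^{i\phi}$, equals $\tfrac{1}{2}|\langle\alpha|\psi\rangle\langle\psi|\beta\rangle|\big(e^{i\phi}|\alpha\rangle\langle\beta|+e^{-i\phi}|\beta\rangle\langle\alpha|\big)$; matching this against the cat-projector difference pins the phase argument of the cat in terms of $\phi$, and substituting back reproduces Eq.~(\ref{eq_c2_QPD}). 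Because $\mathcal{N}$ is real and nonnegative (it is a squared norm), the coefficients $\pm\tfrac{1}{4\pi^2}|\langle\alpha|\psi\rangle\langle\psi|\beta\rangle|\,\mathcal{N}(\alpha,\beta,\phi)$ and $\mp\tfrac{1}{4\pi^2}|\langle\alpha|\psi\rangle\langle\psi|\beta\rangle|\,\mathcal{N}(\alpha,\beta,\phi+\pi)$ are real while each $\hat{C}_{\alpha,\beta,\cdot}\in\mathcal{C}_2$, so Eq.~(\ref{eq_c2_QPD}) is a legitimate continuous quasi-probability decomposition.

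It remains to quantify the overhead and connect the integral to the countable sum in Definition~\ref{def_QPD}. The total weight of Eq.~(\ref{eq_c2_QPD}) is $\bar{\gamma}=\tfrac{1}{4\pi^2}\int_{\mathbb{C}^2}d^2\alpha\,d^2\beta\;|\langle\alpha|\psi\rangle\langle\psi|\beta\rangle|\,\big[\mathcal{N}(\alpha,\beta,\phi)+\mathcal{N}(\alpha,\beta,\phi+\pi)\big]=\tfrac{1}{\pi^2}\big(\int_{\mathbb{C}}d^2\alpha\,|\langle\alpha|\psi\rangle|\big)^2$, using $\mathcal{N}(\alpha,\beta,\theta)+\mathcal{N}(\alpha,\beta,\theta+\pi)=4$, which is finite precisely under the hypothesis. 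To align with Definition~\ref{def_QPD} I would then discretize: truncate the absolutely convergent integral to a compact $K\times K\subset\mathbb{C}^2$ carrying all but $\epsilon/2$ of the trace norm, and replace the integral over $K\times K$ by a Riemann sum, which converges in trace norm by uniform continuity of $(\alpha,\beta)\mapsto\langle\alpha|\psi\rangle\langle\psi|\beta\rangle\big(\mathcal{N}(\alpha,\beta,\phi)\hat{C}_{\alpha,\beta,\phi}-\mathcal{N}(\alpha,\beta,\phi+\pi)\hat{C}_{\alpha,\beta,\phi+\pi}\big)$ on $K\times K$ (the diagonal $\alpha=\beta$, where one cat projector degenerates, is a null set and may be excised). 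For every $\epsilon>0$ this produces a finite real-weighted combination of cat-state projectors within $\epsilon$ of $\hat{\psi}$ with total weight $\le\tfrac{1}{\pi^2}\big(\int_{\mathbb{C}}d^2\alpha\,|\langle\alpha|\psi\rangle|\big)^2+O(\epsilon)$, hence $\gamma_{\mathcal{C}_2}(\hat{\psi})<\infty$ and the QPD over $\mathcal{C}_2$ exists.

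I expect the genuinely delicate point to be this last step — the uniform trace-norm control of the truncation and discretization that turns the exact operator-valued integral into an admissible countable decomposition with controlled weight — whereas the two preceding steps are essentially bookkeeping with Gaussian overlaps, the only thing to watch there being the sign convention relating the cat phase to $\phi$. (I would read ``achieves this overhead'' as exhibiting an explicit admissible QPD with overhead $\tfrac{1}{\pi^2}\big(\int_{\mathbb{C}}d^2\alpha\,|\langle\alpha|\psi\rangle|\big)^2$ rather than as an optimality claim, since that value already equals $4$ for a coherent state, which itself lies in $\mathcal{C}_2$.)
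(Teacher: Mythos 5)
Your proposal is correct and follows essentially the same route as the paper's Appendix A: insert the coherent-state resolution of identity twice, symmetrize the resulting kernel, rewrite the off-diagonal rank-one operators as the difference $\mathcal{N}(\alpha,\beta,\phi)\hat{C}_{\alpha,\beta,\phi}-\mathcal{N}(\alpha,\beta,\phi+\pi)\hat{C}_{\alpha,\beta,\phi+\pi}$, and use $\mathcal{N}(\alpha,\beta,\phi)+\mathcal{N}(\alpha,\beta,\phi+\pi)=4$ to get the overhead $\pi^{-2}\bigl(\int_{\mathbb{C}}d^2\alpha\,|\langle\alpha|\psi\rangle|\bigr)^2$. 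Your additional care about trace-norm absolute convergence, the discretization needed to match the countable-sum form of Definition~\ref{def_QPD}, and the sign convention of the cat phase goes beyond what the paper writes out (and your reading of ``achieves this overhead'' as non-optimality-claiming is the right one), but the underlying argument is the same.
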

\begin{proof}
By expanding the RHS of Eq.~\eqref{eq_c2_QPD}, one can check that this statement is true.
The detailed calculation can be found in Appendix A.
\end{proof}
Notice that for the states that can be written into a finite superposition of Fock states or Gaussian states, it must satisfy the condition $\int_{\mathbb{C}}d^2\alpha\;|\langle\alpha|\psi\rangle|<\infty$.
Based on this observation, we can already simulate a large class of non-Gaussian states with cat states.
Although this overhead is finite, it scales very badly in general.
Hence, we further extend the available states to include arbitrary superpositions of two Gaussian states, i.e. $|\psi\rangle = c_1|g_1\rangle+c_2|g_2\rangle$, and denote the closure of the convex hull as $\mathcal{G}_2$.
\begin{theorem}\label{thm_g2QPD}
The QPD over $\mathcal{G}_2$ exists for pure state $|\psi\rangle$, if there exist a expansion $\ketv{\psi} = \sum_{x} c_x|g_x\rangle$ s.t. $\sum_{x}|c_x|<\infty$.
Furthermore, the corresponding QPD that achieves this overhead is given by
\begin{align}
   &\proj{\psi}=\sum_{x=1}^{R_G}|c_x|^2\proj{g_x}\notag\\
   &+\sum_{x>x'}|c_{x}c_{x'}^*|\left(\frac{\mathcal{N}^+_{x,x'}}{2}\proj{g_{x,x'}^+} - \frac{\mathcal{N}^-_{x,x'}}{2}\proj{g_{x,x'}^-}\right).\label{eq_g2_QPD}
\end{align}
Here, $|c_xc_{x'}^*|e^{\theta_{x,x'}}:= c_xc_{x'}^*$ and $|g_{x,x'}^\pm\rangle := \frac{1}{\sqrt{\mathcal{N}^\pm_{x,x'}}}(|g_x\rangle \pm e^{i\theta_{x,x'}}|g_{x'}\rangle)$.
\end{theorem}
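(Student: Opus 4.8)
The plan is to begin from an admissible expansion $|\psi\rangle = \sum_{x=1}^{R_G} c_x|g_x\rangle$ with $\sum_x|c_x|<\infty$ and to expand the target operator as the double sum $\proj{\psi} = \sum_{x,x'} c_xc_{x'}^*\,|g_x\rangle\langle g_{x'}|$. I would split it into its diagonal part $\sum_x|c_x|^2\proj{g_x}$, where each $\proj{g_x}$ already sits in $\mathcal{G}_1\subseteq\mathcal{G}_2$, and its off-diagonal part, which after pairing the index $(x,x')$ with $(x',x)$ becomes $\sum_{x>x'}\big(c_xc_{x'}^*|g_x\rangle\langle g_{x'}| + c_x^*c_{x'}|g_{x'}\rangle\langle g_x|\big)$, a sum of Hermitian rank-two pieces. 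The one non-routine step is to rewrite each such piece via the elementary identity
\begin{align}
|u\rangle\langle v| + |v\rangle\langle u| = \tfrac12\big[&(|u\rangle+|v\rangle)(\langle u|+\langle v|) \notag\\
&- (|u\rangle-|v\rangle)(\langle u|-\langle v|)\big],
\end{align}
applied to $|u\rangle = |g_x\rangle$ and the phase-adjusted vector $|v\rangle = e^{i\theta_{x,x'}}|g_{x'}\rangle$ (with $\theta_{x,x'}$ fixed by $c_xc_{x'}^*$ as in the statement). Factoring out $|c_xc_{x'}^*|$ and normalizing $|g_x\rangle \pm e^{i\theta_{x,x'}}|g_{x'}\rangle$ into the unit vectors $|g_{x,x'}^\pm\rangle$ reproduces exactly the two terms $+\tfrac{\mathcal{N}^+_{x,x'}}{2}\proj{g_{x,x'}^+}$ and $-\tfrac{\mathcal{N}^-_{x,x'}}{2}\proj{g_{x,x'}^-}$ of Eq.~\eqref{eq_g2_QPD}; since each $|g_{x,x'}^\pm\rangle$ is a superposition of two Gaussian states, $\proj{g_{x,x'}^\pm}\in\mathcal{G}_2$, so Eq.~\eqref{eq_g2_QPD} expresses $\proj{\psi}$ as an affine combination of elements of $\mathcal{G}_2$.

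Then I would read off the overhead. The two cross terms of $\proj{g_{x,x'}^+}$ and $\proj{g_{x,x'}^-}$ carry opposite signs, so $\mathcal{N}^+_{x,x'}+\mathcal{N}^-_{x,x'}=4$ for every pair, and the total weight of Eq.~\eqref{eq_g2_QPD} reduces to
\begin{align}
\bar\gamma = \sum_x|c_x|^2 + \sum_{x\neq x'}|c_x||c_{x'}| = \Big(\sum_x|c_x|\Big)^2 ,
\end{align}
which is finite by hypothesis. This shows simultaneously that a QPD over $\mathcal{G}_2$ exists and that Eq.~\eqref{eq_g2_QPD} attains $\bar\gamma = \big(\sum_x|c_x|\big)^2$, hence $\gamma_{\mathcal{G}_2}(\proj{\psi}) \le \big(\inf\sum_x|c_x|\big)^2$ with the infimum taken over admissible expansions.

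Two technical points deserve care. First, if two Gaussians in the expansion coincide up to a phase then $\mathcal{N}^-_{x,x'}=0$; I would merge any such terms beforehand, so that the $|g_x\rangle$ are pairwise non-proportional and every $|g_{x,x'}^\pm\rangle$ is well defined. Second, when $R_G=\infty$, Eq.~\eqref{eq_g2_QPD} should be read as the trace-norm limit of its finite truncations: since $\sum_x|c_x|<\infty$ and $\||g_x\rangle\|=1$, the series $\sum_x c_x|g_x\rangle$ converges absolutely in $\mathcal{H}$, so the partial sums $|\psi_n\rangle := \sum_{x\le n}c_x|g_x\rangle$ satisfy $\onorm{\proj{\psi_n}-\proj{\psi}}{1}\to0$, while $\proj{\psi_n}$ is exactly the truncation of Eq.~\eqref{eq_g2_QPD} to indices $x,x'\le n$ and has $\bar\gamma_n\le\big(\sum_x|c_x|\big)^2$ uniformly --- precisely the data that the $\epsilon\to0^+$ definition of $\gamma_{\mathcal{G}_2}$ asks for.

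I expect the real obstacle to be the optimality half of the statement: the construction above gives only the upper bound, and showing that no QPD over $\mathcal{G}_2$ can beat $\big(\inf\sum_x|c_x|\big)^2$ would require a dual witness --- a functional on trace-class operators that is bounded on $\mathcal{G}_2$ yet large on $\proj{\psi}$ --- analogous to the role played by the $P$-function in the proof of Theorem~\ref{thm_no-go thm for CQPD}. Absent such a witness, I would present the result as the explicit construction plus the overhead it achieves, deferring the term-by-term verification of Eq.~\eqref{eq_g2_QPD} to an appendix, as is done for Theorem~\ref{thm_c2QPD}.
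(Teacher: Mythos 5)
Your proposal is correct and follows essentially the same route as the paper's Appendix B: expand $\proj{\psi}$ into diagonal and off-diagonal parts, absorb the phase of $c_xc_{x'}^*$ into $|g_{x'}\rangle$, and use the rank-two identity $\mathcal{N}^+_{x,x'}\proj{g_{x,x'}^+}-\mathcal{N}^-_{x,x'}\proj{g_{x,x'}^-}=2e^{i\theta_{x,x'}}|g_x\rangle\langle g_{x'}|+2e^{-i\theta_{x,x'}}|g_{x'}\rangle\langle g_x|$, then read off $\bar\gamma=(\sum_x|c_x|)^2$ from $\mathcal{N}^++\mathcal{N}^-=4$. Your added care about degenerate pairs, trace-norm convergence of infinite expansions, and the fact that only an upper bound on the optimal overhead is established are all consistent with (and slightly more careful than) the paper, which likewise does not prove optimality and instead notes the connection to the Gaussian extent.
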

\begin{proof}
By expanding the RHS of Eq.~\eqref{eq_g2_QPD}, one can check this statement is true, for the same reason as Theorem \ref{thm_c2QPD}.
The detailed calculation can be found in Appendix B.
\end{proof}
We notice that the sampling overhead of the QPD in Theorem \ref{thm_g2QPD} has the same form as the Gaussian extent defined in \cite{Hahn2024}, which is a valid resource measure of non-Gaussianity.

\bigskip
One of the application on the Theorem \ref{thm_g2QPD} is the simulation of Gottesman-Kitaev-Preskill (GKP) states \cite{Gottesman2001}, defined as 
\begin{align}
    |\mu\rangle_{GKP} = \sum_{n=-\infty}^\infty |(2n+\mu)\sqrt{\pi}\rangle_Q,
\end{align}
where $\mu = \{0,1\}$ and $\ketv{q}_Q$ is the position eigenstate.
The GKP state has significant applications in full-tolerance quantum computing.
However, the ideal GKP state is not a physical state, since it can't be normalized.
Therefore, we need to approximate the ideal GKP state.
A straightforward approximated GKP state is given in the form
\begin{align}
    |\Tilde{\mu}\rangle_{GKP} = \sum_{n=-\infty}^\infty c_n\hat{D}_{2n+\mu}\hat{S}(r)|0\rangle_N.
\end{align}
Here we set $\hat{D}_{m} = \hat{D}(m\sqrt{\frac{\pi}{2}})$ and $c_n$ is a positive coefficient, which can have different choices and usually depends on the state generation protocol.
However, as long as we have access to $\mathcal{G}_2$ state, we may construct it using Theorem ~\ref{thm_g2QPD}.
Since it is already written in the superposition of the Gaussian states, the corresponding sampling overhead is then equal to $(\sum_n c_n)^2$.

For the random walk approach of GKP state generation \cite{Lin2020, Wu2025}, the approximated GKP zero state has the form
\begin{align}
    |\Tilde{0}\rangle_{GKP} = \frac{1}{\sqrt{\mathcal{N}}}\sum_{n=0}^L \binom{L}{n} \hat{D}_{2n-L}\hat{S}(r)|0\rangle_N
\end{align}
where $L$ is the number of steps in the random walk.
With circuit knitting we obtain the sampling overhead $\bar{\gamma} = \frac{4^{L}}{\mathcal{N}}$, hence by the Hoeffding’s inequality, $\frac{16^L}{\mathcal{N}^2}$ additional shot are needed.
But compared to the random walk approach, we need to implement the post-selection on the auxiliary qubit with $\frac{1}{2}$ success probability for each round.
So $2^L$ additional shots are needed.
However, the normalization constant is $\mathcal{N}\approx\sum_{n=0}^L \binom{L}{n}^2 = \binom{2L}{L}$, when $\langle \zeta = r|\hat{D}^\dag_{2n}\hat{D}_{2m}|\zeta = r\rangle = e^{-e^{2r}\pi\frac{|n-m|^2}{2}} \approx \delta_{nm}$.
Thus, the ratio of the additional shots between the two approaches is given by
\begin{align}
    \frac{S_{CK}}{S_{RW}} = \frac{8^L}{\mathcal{N}^2}\approx 8^L\frac{(L!)^4}{(2L!)^2}\approx \pi L 2^{-L},
\end{align}
We obtain the last approximation by Stirling's formula.
This shows that additional shots of circuit knitting are exponentially smaller than the random walk approach.
We plot the number of additional shots in the FIG.\ref{fig:shots-GKP}
\begin{figure}
    \centering
    \includegraphics[width=1.0\linewidth]{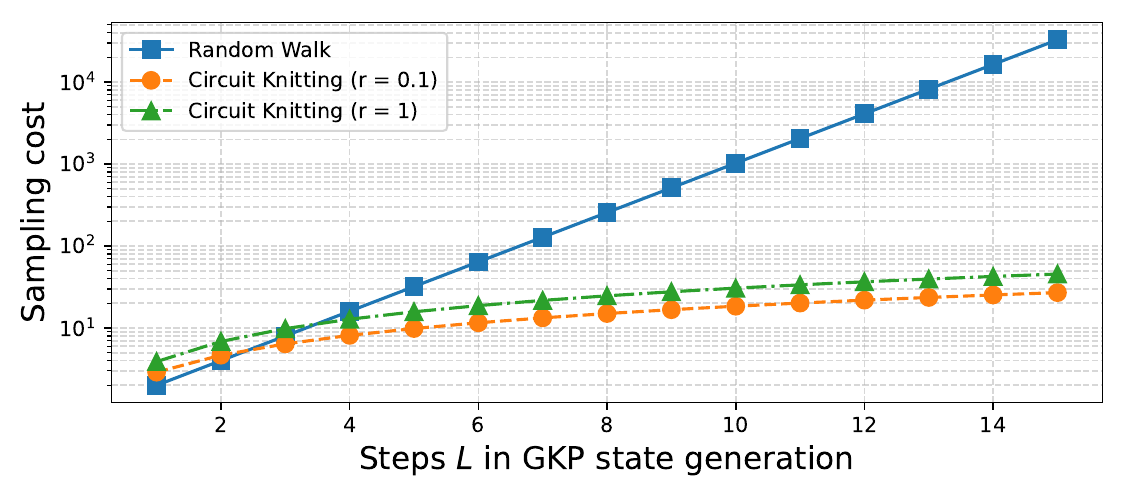}
    \caption{
    We plot the additional sampling cost for different approaches to GKP state generation. Including random walk (blue square), circuit knitting under $r = 0.1$ (orange circle), and circuit knitting under $r = 1$ (green triangle)
    }
    \label{fig:shots-GKP}
\end{figure}
Also, another advantage of circuit knitting is the steps of state preparation, where the circuit complexity is independent of the L, unlike the random walk approach.

A big disadvantage of circuit knitting is that one needs to be able to generate large-sized squeezed cat $\propto|\alpha, r\rangle\pm |-\alpha,r\rangle$, since the size of $\alpha$ will give the limit of $L$.
But the random walk approach won't face this difficulty, since the walking distance is fixed for each round.
Also, circuit knitting only allows us to obtain the measurement outcome of the state, unlike the random walk approach.

In short, both methods will be useful in different scenarios.
One can also combine those two methods, where we first generate a GKP state with circuit knitting and then increase the fidelity of the GKP state through a random walk.
\bigskip

\paragraph{Multi-mode Gaussian unitary}
In the multi-mode scenario, we may include the entanglement as the resource.
It has been shown that, for the bipartite pure state with Schmidt decomposition $\ketv{\Psi}=\sum_{n}s_n\ketv{a_n,b_n}$, we have $\gamma_{\mathrm{SEP}}(\hat{\Psi})= 2(\sum_{n}s_n)^2-1$ \cite{Vidal1999, Lami2021}.
Thus, the optimal sampling overhead and decomposition are found.
This motivates us to investigate the circuit knitting of the multi-mode Gaussian unitary.
For any Gaussian unitary $\hat{U}_S$, there exists a corresponding symplectic matrix $S$. 
In which the unitary transformation of quadrature operators $\hat{\bm{R}} = [\hat{Q}_1,\hat{P}_1,\dots,\hat{Q}_M,\hat{P}_M]^T$ can be written as $\hat{U}_S \hat{\bm{R}} \hat{U}_S^\dag = S\hat{\bm{R}}$.
Due to the complected facture of the local operation and classical communication, we take all separable operations as the available set, that is, a quantum channel who's Klaus operator is given in the product form, i.e., $\hat{K}_x = \hat{K}^{(a)}_x\otimes \hat{K}^{(b)}_x$.
We denote the set of separable states or operations as $\mathrm{SEP}$, which should be clear whether it refers to a state or operation in the latter discussion.

\begin{theorem}\label{thm_no-go thm for GCC}
For any non-local Gaussian unitary, the QPD over \textbf{SEP} does not exist.
\end{theorem}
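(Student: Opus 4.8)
The plan is to turn this channel no-go statement into a statement about states, by running $\hat{U}_S$ on two two-mode squeezed vacuum states whose other halves are retained as references --- i.e., on a regularized Choi state of $\hat{U}_S$. Suppose, toward a contradiction, that $\hat{U}_S$ on modes $ab$ admits a QPD over $\mathrm{SEP}$, $\hat{U}_S(\cdot)\hat{U}_S^\dag=\sum_x q_x\,\mathcal{E}_x$ with each $\mathcal{E}_x$ a separable operation and $\bar{\gamma}=\sum_x|q_x|<\infty$. Adjoin reference modes $a'b'$; then $\hat{U}_S\otimes\hat{\mathbb{I}}$ inherits the decomposition $\sum_x q_x(\mathcal{E}_x\otimes\mathrm{id})$ with the same $\bar{\gamma}$, and since each Kraus operator of $\mathcal{E}_x$ factorizes as $\hat{K}^{(a)}\otimes\hat{K}^{(b)}$, the Kraus operators of $\mathcal{E}_x\otimes\mathrm{id}$ factorize across the bipartition $aa'\,|\,bb'$. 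Applying this to the input $\ketv{\mathrm{TMSV}(t)}_{aa'}\otimes\ketv{\mathrm{TMSV}(t)}_{bb'}$, which is a product state across $aa'\,|\,bb'$, each $\mathcal{E}_x\otimes\mathrm{id}$ preserves separability across that cut, so the output pure state $\hat{\Psi}_t$ acquires a QPD over $aa'\,|\,bb'$-separable states with overhead at most $\bar{\gamma}$; hence $\gamma_{\mathrm{SEP}}(\hat{\Psi}_t)\le\bar{\gamma}$ for all $t$.

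Because $\hat{U}_S$ is Gaussian, $\hat{\Psi}_t$ is a Gaussian pure state, so its Schmidt decomposition across $aa'\,|\,bb'$ has discrete, square-summable coefficients $\{s_n\}$, fixed by the symplectic eigenvalues of its reduced covariance matrix. Combining this with the pure-state formula $\gamma_{\mathrm{SEP}}(\hat{\Psi}_t)=2(\sum_n s_n)^2-1$ recalled above, the previous bound forces $\sum_n s_n\le\sqrt{(\bar{\gamma}+1)/2}$ uniformly in $t$. It then remains only to show that for a non-local $\hat{U}_S$ this Schmidt $\ell_1$-norm diverges as $t\to\infty$.

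For that I would reason as follows. As $t\to\infty$, $\ketv{\mathrm{TMSV}(t)}$ tends to the unnormalizable maximally entangled vector, so $\hat{\Psi}_t$ approaches the Choi state of $\hat{U}_S$, whose Schmidt rank across $aa'\,|\,bb'$ equals the operator Schmidt rank of $\hat{U}_S$ with respect to the $a\,|\,b$ split. A non-local Gaussian unitary has infinite operator Schmidt rank --- equivalently, its symplectic matrix has a nonzero off-diagonal (cross-party) block --- and a direct covariance-matrix computation shows that this cross-coupling drives a symplectic eigenvalue of the reduced state of $\hat{\Psi}_t$ to infinity, so that $\sum_n s_n\to\infty$. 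The quantitative picture is transparent on examples: for the two-mode squeezer, or even for the swap gate, one finds $\sum_n s_n = e^{2t}$ (or an analogous diverging expression), so $\gamma_{\mathrm{SEP}}(\hat{\Psi}_t)=2e^{4t}-1\to\infty$. This contradicts the uniform bound, forcing $\bar{\gamma}=\infty$, so no QPD over $\mathrm{SEP}$ can exist.

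The main obstacle is establishing the divergence $\sum_n s_n\to\infty$ in full generality --- for \emph{every} non-local multi-mode Gaussian unitary and \emph{every} bipartition, including mode-permuting gates such as the swap, for which no product input is ever entangled and the reference systems are genuinely needed. This calls for a normal-form analysis of symplectic matrices under the bipartition (isolating a genuinely coupled two-mode sector and showing the induced transformation injects two-mode squeezing of strength growing with $t$ across the cut), together with enough control of the $t\to\infty$ limit to keep the states trace-class, so that the pure-state robustness formula remains applicable throughout.
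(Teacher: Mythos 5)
Your overall strategy matches the paper's: adjoin reference modes, feed in highly squeezed entangled inputs, reduce the channel statement to a state statement via the pure-state formula $\gamma_{\mathrm{SEP}}(\hat{\Psi})=2(\sum_n s_n)^2-1$, and derive a contradiction from a uniform bound as the squeezing parameter grows. (The paper squeezes only the $a$-side references and puts vacuum on $b$, but that is cosmetic.) However, the decisive step --- proving that $\sum_n s_n$ actually diverges for \emph{every} non-local Gaussian unitary --- is not established in your proposal. You verify it on examples and then defer the general case to an unexecuted ``normal-form analysis of symplectic matrices under the bipartition,'' which you yourself flag as the main obstacle. As written, the argument only proves the theorem for the specific gates you computed; the general claim that a nonzero cross block forces a symplectic eigenvalue of the reduced state to infinity, and that this in turn forces the Schmidt $\ell_1$-norm (not just the Schmidt rank or some single eigenvalue) to diverge, is asserted rather than shown.

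The paper closes exactly this gap without any normal-form analysis, by \emph{lower-bounding} the Schmidt $\ell_1$-norm through the purity: Hölder's inequality gives $1=\left(\sum_n s_n^2\right)^{3}\le \left(\sum_n s_n^4\right)\left(\sum_n s_n\right)^{2}$, hence $\gamma_{\mathrm{SEP}}\ge 2/\operatorname{Tr}(\hat{\rho}_A^2)-1=2\sqrt{\det\Lambda_A}-1$ for a pure Gaussian output. The reduced covariance matrix of the output is computed in closed form as $\cosh(2r)\,S_{01}S_{01}^{T}+S_{11}S_{11}^{T}$, whose determinant manifestly diverges as $r\to\infty$ whenever the cross block $S_{01}\neq 0$; and $S_{01}=0\Leftrightarrow S_{10}=0$ for a symplectic $S$, so finiteness forces $S=S_a\oplus S_b$. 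Replacing your intended spectral analysis with this single determinant bound is what turns your outline into a complete proof; without it (or an equivalent quantitative estimate valid for arbitrary $S$ and arbitrary mode numbers), the proposal has a genuine gap at its central step.
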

The proof can be found in Appendix C.
Theorem \ref{thm_no-go thm for GCC} indicates the difference between finite and infinite-dimensional systems. As a corollary, we can conclude that techniques like wire cutting \cite{Brenner2023} do not exist in the continuous variable system.
\begin{corollary}\label{cor_no-go thm for GWC}
For a perfect teleportation of a continuous variable state, the QPD over \textbf{SEP} does not exist.
\end{corollary}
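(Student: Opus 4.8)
\noindent
The plan is to derive Corollary~\ref{cor_no-go thm for GWC} as a direct consequence of Theorem~\ref{thm_no-go thm for GCC}, by arguing that a successful circuit knitting of continuous-variable teleportation would, in turn, knit a non-local Gaussian unitary. First I would formalize ``perfect teleportation of a continuous-variable state'' as the channel $\mathcal{T}$ that takes an arbitrary input state held by Alice to a fresh output register held by Bob, i.e.\ the identity channel $\mathrm{id}_{\mathcal{H}}$ across the $A|B$ cut, where the entangled resource used in the physical protocol is precisely what we are trying to eliminate --- only separable ancillary modes count as free. Assume, for contradiction, that $\mathcal{T}$ admits a QPD over $\mathrm{SEP}$, say $\mathcal{T}=\sum_{x}q_{x}\mathcal{S}_{x}$ with $\mathcal{S}_{x}\in\mathrm{SEP}$ and $\bar\gamma=\sum_{x}|q_{x}|<\infty$ in the operational, $\epsilon\to 0$ sense of Definition~\ref{def_QPD}.

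\noindent
The key observation is that perfect teleportation performed once in each direction between two modes --- Alice transmits her data mode to Bob while Bob transmits his to Alice --- realizes, after relabeling the output registers, the two-mode SWAP unitary $\hat{U}_{\mathrm{SWAP}}$, possibly tensored with a fixed input-independent ``junk'' state on the residual modes that can simply be discarded. The corresponding symplectic matrix $S$ merely permutes $(\hat{Q}_1,\hat{P}_1)\leftrightarrow(\hat{Q}_2,\hat{P}_2)$; it is symplectic but not block-diagonal across the cut, so $\hat{U}_{\mathrm{SWAP}}$ is a genuinely non-local Gaussian unitary. Applying the assumed decomposition to each of the two teleportation rounds and combining them, I would use that the composition of separable operations across a fixed $A|B$ cut --- in series and in parallel on disjoint registers --- is again separable (Kraus operators in product form multiply into product form), and that the total overhead of composed QPDs is the product of the individual overheads. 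This yields $\hat{U}_{\mathrm{SWAP}}(\cdot)\hat{U}_{\mathrm{SWAP}}^{\dagger}=\sum_{x,x'}q_{x}q_{x'}\,\mathcal{S}'_{x,x'}$ with $\mathcal{S}'_{x,x'}\in\mathrm{SEP}$ and overhead $\bar\gamma^{2}<\infty$, a QPD of a non-local Gaussian unitary over $\mathrm{SEP}$ --- contradicting Theorem~\ref{thm_no-go thm for GCC}. Hence no QPD of the teleportation channel over $\mathrm{SEP}$ exists.

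\noindent
I expect the main obstacle to be the bookkeeping rather than any hard estimate: one must pin down the bipartite register structure (distinct input/output registers on each side, plus separable ancillas), insist that the shared entangled resource is not among the free operations so that the statement is not vacuous, and check that the Bell-type measurements and classical feed-forward used in the two-round SWAP construction are carried by Kraus operators of the product form $\hat{K}^{(a)}\otimes\hat{K}^{(b)}$, so that the composed channel truly lands in $\mathrm{SEP}$. I would also record the routine fact that QPDs compose well in the $1$-norm/diamond-norm sense used to define the optimal overhead, so that taking the two $\epsilon\to 0$ limits and then combining is legitimate. Once these points are settled, the reduction to Theorem~\ref{thm_no-go thm for GCC} closes the argument, and it also makes transparent why finite-dimensional wire cutting has no continuous-variable counterpart.
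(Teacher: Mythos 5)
Your proposal is correct and follows essentially the same route as the paper: assume a finite-overhead QPD of teleportation over $\mathrm{SEP}$, compose two rounds to obtain a QPD of a non-local Gaussian unitary with overhead $\bar\gamma^{2}$, and contradict Theorem~\ref{thm_no-go thm for GCC}. The only difference is cosmetic --- the paper invokes the generic ``two teleportations implement any non-local unitary'' argument while you instantiate it with the SWAP unitary and spell out the composition bookkeeping --- so no further comparison is needed.
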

\begin{proof}
If the QPD over \textbf{SEP} exists for a perfect teleportation of a continuous variable state with the sampling overhead $\bar{\gamma}$.
Then with two rounds of state teleportation, any non-local unitary can be implemented with sampling overhead $\bar{\gamma}^2$, which contradicts Theorem \ref{thm_no-go thm for GCC}, hence such a QPD does not exist.
\end{proof}
As a remark, both Theorem \ref{thm_no-go thm for GCC} and Corollary \ref{cor_no-go thm for GWC} consider the perfect implementation of non-local operation.
So this does not contradict the result in \cite{Anai2024}, since they are working on the circuit knitting with approximation.

\bigskip
\paragraph{Approximated quasi-probability decomposition}
In reality, there are always some non-vanishing errors, so it is useful to consider the circuit knitting with approximation \cite{Piveteau2022}.
Specialty, for the infinite-dimensional system, we have already shown several no-go theorems, which may be released by introducing a non-vanishing error in the decomposition.
Formally, we define approximate QPD as follows:
\begin{definition}($\epsilon$-QPD)\label{def_approx-QPD}
A state $\hat{\psi}\in\mathcal{D}(\mathcal{H})$ is said to admit an $\epsilon$-QPD over $\mathbb{F}\subset \mathcal{D}(\mathcal{H})$.
If $\forall \epsilon>0$ there exists a corresponding $\hat{\psi}_\epsilon\in\mathcal{D}(\mathcal{H})$, such that the QPD of $\hat{\psi}_\epsilon$ over $\mathbb{F}$ exists and $\onorm{\hat{\psi}-\hat{\psi}_\epsilon}{1}<\epsilon$.
\end{definition}

By allowing the $\epsilon$ difference, one may avoid the divergence in the sampling overhead.
We now show several examples of $\epsilon$-QPD.

\begin{example}[Single-photon state]
A QPD of the approximated single photon state can be given by
\begin{align}
    \hat{\psi}_\epsilon = \frac{1}{1-e^{-\epsilon}}\left(\int_0^{2\pi} \frac{d\theta}{2\pi}\proj{\sqrt{\epsilon} e^{i\theta}} - e^{-\epsilon}\proj{0}_N\right)
\end{align}
One can verify it by checking that the trace norm with a single-photon state is $\onorm{\proj{1}_N-\hat{\psi}_\epsilon}{1} =2(\frac{e^\epsilon-1-\epsilon}{e^{\epsilon}-1})<\epsilon$.
The corresponding sampling overhead is $\bar{\gamma} =\frac{e^\epsilon+1}{e^\epsilon-1}$.
\end{example}
\noindent
A similar decomposition is given by \cite{Bourassa2021}, which achieves the same purpose by using the thermal state minus the vacuum.
However, using the dephasing coherent state can provide a lower overhead in our setting.

Moreover, we can conclude the following corollary:
\begin{corollary}
The $\epsilon$-QPD over $\mathcal{C}_1$ exist for any Fock state $\proj{n}_N$.
\end{corollary}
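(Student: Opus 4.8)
The plan is to regularise the highly singular Glauber--Sudarshan $P$-representation of the Fock state by a vanishingly weak classical-noise channel. The smeared state is a genuine density operator whose $P$-function is absolutely integrable; splitting that $P$-function into its positive and negative parts then exhibits a two-term QPD over $\mathcal{C}_1$, while the smearing keeps the state within trace-norm $\epsilon$ of $\proj{n}_N$. Concretely, for $\mu>0$ I would introduce the additive-Gaussian-noise (random-displacement) channel
\[
\mathcal{N}_\mu(\hat{\rho}) \;=\; \int_{\mathbb{C}} \frac{d^2\gamma}{\pi\mu}\, e^{-|\gamma|^2/\mu}\; \hat{D}(\gamma)\,\hat{\rho}\,\hat{D}(\gamma)^\dagger ,
\]
and take $\hat{\psi}_\epsilon := \mathcal{N}_\mu(\proj{n}_N)$ for a suitably small $\mu$.

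First I would assemble two standard ingredients. (i) If a density operator is $\hat{\sigma}=\int_{\mathbb{C}}P(\alpha)\,\proj{\alpha}\,d^2\alpha$ with $\int_{\mathbb{C}}|P(\alpha)|\,d^2\alpha<\infty$, write $P=P^{+}-P^{-}$ for its positive and negative parts and set $c^{\pm}=\int_{\mathbb{C}}P^{\pm}(\alpha)\,d^2\alpha$ and $\hat{\rho}^{\pm}=(c^{\pm})^{-1}\int_{\mathbb{C}}P^{\pm}(\alpha)\,\proj{\alpha}\,d^2\alpha$; each $\hat{\rho}^{\pm}$ is a state with a nonnegative integrable $P$-function, hence lies in $\mathcal{C}_1$ (it is a trace-norm limit of finite convex combinations of coherent states), so $\hat{\sigma}=c^{+}\hat{\rho}^{+}-c^{-}\hat{\rho}^{-}$ is a QPD over $\mathcal{C}_1$ with overhead $c^{+}+c^{-}=\int_{\mathbb{C}}|P|$. (ii) The channel $\mathcal{N}_\mu$ multiplies the normal-ordered characteristic function by $e^{-\mu|\lambda|^2}$; since that of $\proj{n}_N$ is the Laguerre polynomial $L_n(|\lambda|^2)$, the one of $\mathcal{N}_\mu(\proj{n}_N)$ is $L_n(|\lambda|^2)\,e^{-\mu|\lambda|^2}$, a Schwartz function for every $\mu>0$. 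Hence the $P$-function $P_\mu$ of $\mathcal{N}_\mu(\proj{n}_N)$ is itself Schwartz, in particular $\int_{\mathbb{C}}|P_\mu|<\infty$, and (i) applies: $\mathcal{N}_\mu(\proj{n}_N)$ admits a QPD over $\mathcal{C}_1$ with finite overhead $1+2\int_{\mathbb{C}}P_\mu^{-}$ (using $\int P_\mu^{+}-\int P_\mu^{-}=\operatorname{Tr}\mathcal{N}_\mu(\proj{n}_N)=1$). At $\mu=0$ the characteristic function degenerates to a polynomial and $P_0$ to a finite-order derivative of a delta at the origin, the divergence behind Theorem~\ref{thm_no-go thm for CQPD}.

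To finish I would bound the approximation error. Writing $g_\mu(\gamma)=\tfrac{1}{\pi\mu}e^{-|\gamma|^2/\mu}$, the triangle inequality gives $\onorm{\mathcal{N}_\mu(\proj{n}_N)-\proj{n}_N}{1}\le\int_{\mathbb{C}}g_\mu(\gamma)\,\onorm{\hat{D}(\gamma)\proj{n}_N\hat{D}(\gamma)^\dagger-\proj{n}_N}{1}\,d^2\gamma$, and since the integrand is continuous in $\gamma$, vanishes at $\gamma=0$, and is bounded by $2$, this tends to $0$ as $\mu\to0^{+}$ by dominated convergence. Given $\epsilon>0$, choosing $\mu$ so that this bound is $<\epsilon$ makes $\hat{\psi}_\epsilon=\mathcal{N}_\mu(\proj{n}_N)$ a density operator that admits a QPD over $\mathcal{C}_1$ and satisfies $\onorm{\proj{n}_N-\hat{\psi}_\epsilon}{1}<\epsilon$; as $\epsilon$ was arbitrary, Definition~\ref{def_approx-QPD} is met. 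For $n=1$ this yields an $\epsilon$-QPD alternative to the explicit two-term one in the Example. I expect the only genuinely non-routine points to be invoking, rather than re-deriving, (a) the smearing action of the classical-noise channel on the $P$-function and (b) the fact that a nonnegative integrable $P$-function defines a state in $\mathcal{C}_1$; the Schwartz regularity of $P_\mu$ and the trace-norm error estimate are then routine Gaussian calculus.
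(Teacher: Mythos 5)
Your argument is correct, but it follows a genuinely different route from the paper's. The paper bootstraps from its single-photon example: it takes $k$ copies of the approximate single-photon state $\hat{\psi}_\delta$ (dephased coherent states minus vacuum), interferes them on a balanced beam-splitter network implementing a quantum Fourier transform, post-selects the $k-1$ auxiliary modes on vacuum (photon bunching), and then bounds the trace distance of the resulting normalized state to $\proj{k}_N$ via a lower bound on the post-selection probability; a closing remark is needed to argue that the auxiliary modes and post-selection can be absorbed into the coefficients of the decomposition because a beam splitter does not entangle classical states. You instead regularize the $P$-function directly: Gaussian random displacement multiplies the normal-ordered characteristic function $L_n(|\lambda|^2)$ by $e^{-\mu|\lambda|^2}$, rendering the $P$-function Schwartz, and the positive/negative split of that $P$-function is itself the QPD; the approximate-identity estimate then controls the trace-norm error. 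What your route buys is economy and generality: a two-term decomposition, no auxiliary modes or post-selection to account for, and an argument that applies verbatim to any state with a polynomially bounded characteristic function (hence to superpositions of finitely many Fock states, not just $\proj{n}_N$). What the paper's route buys is an explicit closed-form error bound $\epsilon_k(\delta)$ and an overhead expressed through the single-photon overhead, with a construction phrased entirely in terms of operations (beam splitters, vacuum projection) used elsewhere in the paper, whereas your overhead $1+2\int_{\mathbb{C}}P_\mu^{-}$ is finite but left implicit as the $L^1$ norm of a Fourier transform. The only points you defer — the smearing action of the classical-noise channel on the characteristic function and the fact that a nonnegative integrable $P$-function yields a state in $\mathcal{C}_1$ — are indeed standard, so I see no gap.
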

\begin{proof}
Notice that we already have the example showing that an approximate QPD exists for a single-photon state.
Then, by using techniques like photon bunching, we may use it to construct any Fock state with finite overhead.
More details can be found in Appendix D. 
\end{proof}

\bigskip
\paragraph{Cat-state amplification}
In this section, we introduce a cat-state amplification protocol, which can merge multiple small cat-states into one big cat-state through circuit knitting.

Rather than starting from the multiple cat-states as input, let's consider the state
\begin{align}
    |\Phi(\alpha,\theta)\rangle \propto (|\alpha,\alpha\rangle + e^{i\theta}|-\alpha,-\alpha\rangle) 
\end{align}
By adding the balance beam splitter, we have
\begin{align}
    \hat{U}_B|\Phi(\alpha,\theta)\rangle = |0\rangle_N\otimes |Cat(\sqrt{2}\alpha,-\sqrt{2}\alpha,\theta)\rangle\label{eq_cat-distilation} 
\end{align}
Thus, we obtain the cat-state by tracing out the vacuum, whose size is increased by the factor of $\sqrt{2}$.
However, the state $|\Phi(\alpha,\theta)\rangle$ can be simulated by the cat-states through the circuit knitting, by using a similar construction of the Bell state.
With ability of preparing $|\psi_n(\alpha)\rangle \propto (|\alpha\rangle+e^{i\frac{\theta+n\pi}{2}}|-\alpha\rangle)$, with $n\in \mathbb{Z}_4$ and fix $\alpha$.
We can use it to construct the QPD of $|\Phi(\alpha,\theta)\rangle$, with sampling overhead
\begin{align}
    \bar{\gamma}(\alpha,\theta) = \frac{3}{1+\cos{\theta}e^{-4|\alpha|^2}}.
\end{align}
The details construction of this QPD can be found in Appendix E.

Notice that we can, in principle, repeat this process multiple times by feeding the outcomes of the amplified cat into the next round.
For each round, the size of the cat will be increased by the factor $\sqrt{2}$.
And the recursion formula for the sampling overhead is 
\begin{align}
    \bar{\gamma}(\sqrt{2}\alpha,\theta) \approx 1+2\bar{\gamma}^2(\alpha,\theta),
\end{align}
when $e^{-4|\alpha|^2}\approx 0$ for large enough  $\alpha$.
Although this cat amplification protocol may seem elegant, its super-exponential scaling makes it nearly unusable for practical purposes beyond $L>3$.
We plot the sampling overhead of cat state amplification in FIG.\ref{fig:fack-cat}
\begin{figure}[ht]
    \centering
    \includegraphics[width=1.0\linewidth]{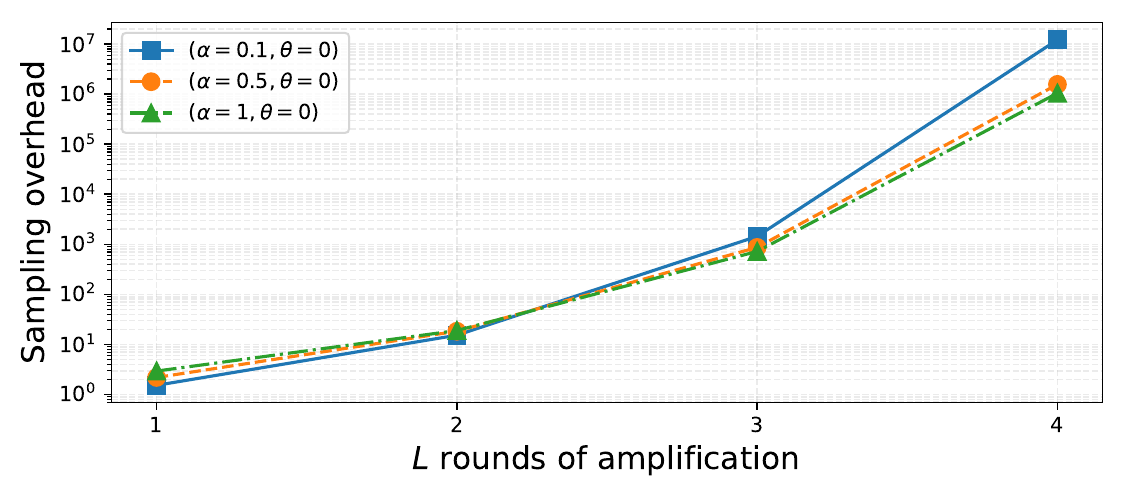}
    \caption{We plot the sampling overhead of cat state amplification, for $\alpha = 0.1$ (blue square), $\alpha = 0.5$ (orange circle), and $\alpha = 1$ (green triangle). 
    }
    \label{fig:fack-cat}
\end{figure}

\bigskip
\paragraph{Conclusion}\label{sec::Conclusion}
In summary, we employ the circuit knitting techniques for the continuous-variable system.
When we only have access to the coherent state, the Fock state can still be simulated approximately.
But when we have access to the cat-state or superposition of two Gaussian states, most of the non-Gaussian states can be exactly simulated with a finite sampling overhead.
For the multi-mode system, we show that any non-local Gaussian unitary can not be simulated with a separable operation.

Several interesting applications are given, including the generation of the approximated GKP state.
Circuit knitting can reduce additional shots exponentially, compared to the random walk approach.
Also, we propose a virtual amplification protocol for the cat state, which allows us to amplify the size of the cat by $\sqrt{2}$, with overhead $\approx 3$.

This study provides both fundamental limitations on circuit knitting for the continuous variable state and also concrete examples of applying circuit knitting to wild range of applications.
As a remark, we want to emphasize that, based on our result, circuit knitting is not efficient compared to the existing classical simulation approach \cite{Bourassa2021, Hahn2024}.
However, a big advantage of the circuit knitting is that it is compatible with the real experiment.
That is, for the real physical implementation of the quantum system, we can freely replace part of the element with the virtual state or operation through the circuit knitting.

\bibliographystyle{apsrev4-2} 
\bibliography{NGQPD}

\clearpage
\onecolumngrid
\begin{center}
\textbf{\large Appendix}
\end{center}

\subsection{A. Quasi-probability decomposition over $\mathcal{C}_2$} \label{appendix_c2}
Let $|\psi\rangle $ be the target pure state.
By subtracting the resolution of identity, we have
\begin{align}
   \proj{\psi} &= \frac{1}{\pi^2}\int_{\mathbb{C}^2}d^2\alpha d^2\beta \;\langle \alpha|\psi\rangle\langle\psi|\beta\rangle(|\alpha\rangle\langle\beta|) \notag\\
    &= \frac{1}{2\pi^2}\int_{\mathbb{C}^2}d^2\alpha d^2\beta\;|\langle \alpha|\psi\rangle\langle\psi|\beta\rangle|(e^{i\phi}|\alpha\rangle\langle\beta|+e^{-i\phi}|\beta\rangle\langle\alpha|) 
\end{align}
Here $e^{i\phi }|\langle \alpha |\psi\rangle\langle \psi|\beta \rangle | = \langle \alpha |\psi\rangle\langle \psi|\beta \rangle$, and we further set $\hat{C}_{\alpha, \beta,\phi}:= |Cat(\alpha, \beta,\phi)\rangle\langle Cat(\alpha, \beta,\phi)|$.
Then it has
\begin{align}
    \mathcal{N}(\alpha,\beta,\phi)\hat{C}_{\alpha,\beta,\phi}
    -\mathcal{N}(\alpha,\beta,\phi+\pi)\hat{C}_{\alpha,\beta,\phi+\pi}
    = 2(e^{i\phi}|\alpha\rangle\langle\beta|+e^{-i\phi}|\beta\rangle\langle\alpha|). 
\end{align}
Hence we obtain the decomposition over $\mathcal{C}_2$, by put it back to the equation, in which
\begin{align}
   |\psi\rangle\langle\psi| &=  \frac{1}{4\pi^2}\int_{\mathbb{C}^2}d^2\alpha d^2\beta\;|\langle \alpha|\psi\rangle\langle\psi|\beta\rangle|\notag\left(\mathcal{N}(\alpha,\beta,\phi)\hat
    {C}_{\alpha,\beta,\phi} -\mathcal{N}(\alpha,\beta,\phi+\pi)\hat{C}_{\alpha,\beta,\phi+\pi}\right). 
\end{align}
The sampling overhead for this decomposition is then 
\begin{align}
    \Bar{\gamma}(\hat{\psi}) = &\frac{1}{4\pi^2}\int_{\mathbb{C}^2}d^2\alpha d^2\beta\;|\langle \alpha|\psi\rangle\langle \psi|\beta\rangle|\left(\mathcal{N}(\alpha,\beta,\phi)+\mathcal{N}(\alpha,\beta,\phi+\pi)\right)
    = \left(\frac{1}{\pi}\int_{\mathbb{C}}d^2\alpha\;|\langle\alpha|\psi\rangle|\right)^2.
\end{align}
Thus it is a valid QPD, when $\int_{\mathbb{C}}d^2\alpha\;|\langle\alpha|\psi\rangle|<\infty$.

\subsection{B. Quasi-probability decomposition over $\mathcal{G}_2$}\label{appendix_g2}
Suppose the target state can be write as $|\psi\rangle = \sum_{x=1}^{R_G} c_x|g_x\rangle$.
With $|g_x\rangle = \ketv{\alpha_x,\zeta_x}$ be a Gaussian state.
Then its density matrix is
\begin{align}
\proj{\psi} =& \sum_{x,x'=1}^{R_G}c_{x}c_{x'}^*\ketv{g_x}\brav{g_{x'}}
=\sum_{x=1}^{R_G}|c_x|^2\proj{g_x}+\sum_{x\neq x'}|c_xc_{x'}^*|e^{i\theta_{x,x'}}\ketv{g_x}\brav{g_{x'}} 
\end{align}
where $|c_xc_{x'}^*|e^{\theta_{x,x'}}:= c_xc_{x'}^*$.
For the non-diagonal parts ($x\neq x'$), we can rewrite them as
\begin{align}
   \sum_{x>x'}|c_xc_{x'}^*|\left(e^{i\theta_{x,x'}}|g_x\rangle\langle g_{x'}|+e^{-i\theta_{x,x'}}|g_{x'}\rangle\langle g_x| \right) 
\end{align}
Defined the state in $\mathcal{G}_2$ as $|g_{x,x'}^\pm\rangle = \frac{1}{\sqrt{\mathcal{N}^\pm_{x,x'}}}(|g_x\rangle \pm e^{i\theta_{x,x'}}|g_{x'}\rangle)$ with normalization constant $\mathcal{N}^\pm_{x,x'} = 2\pm(e^{\theta_{x,x'}}\langle g_x|g_{x'}\rangle +e^{-\theta_{x,x'}}\langle g_{x'}|g_x\rangle)$, hence we have
\begin{align}
     {\mathcal{N}^+_{x,x'}}|g_{x,x'}^+\rangle\langle g_{x,x'}^+| - {\mathcal{N}^-_{x,x'}}|g_{x,x'}^-\rangle\langle g_{x,x'}^-| 
     =2e^{i\theta_{x,x'}}|g_x\rangle\langle g_{x'}|+2e^{-i\theta_{x,x'}}|g_{x'}\rangle\langle g_x|. 
\end{align}
Thus, by subtracting it back from the non-diagonal terms, we obtain the decomposition over $\mathcal{G}_2$ as
\begin{align}
   &\proj{\psi}=\sum_{x=1}^{R_G}|c_x|^2\proj{g_x}+\sum_{x>x'}|c_{x}c_{x'}^*|\left(\frac{\mathcal{N}^+_{x,x'}}{2}\proj{g_{x,x'}^+} - \frac{\mathcal{N}^-_{x,x'}}{2}\proj{g_{x,x'}^-}\right).
\end{align}
The corresponding sampling overhead is 
\begin{align}
    \Bar{\gamma}(\hat{\psi}) =& \sum_{x=1}^{R_G}|c_x|^2+\sum_{x>x'}|c_xc_{x'}^*|\left(\frac{\mathcal{N}^+_{x,x'}+\mathcal{N}^-_{x,x'}}{2}\right)
    =\left(\sum_{x=1}^{R_G}|c_x|\right)^2.
\end{align}
However, its expansion is not unique, so we may take the infimum of the sampling overhead over all possible expansions.
So we obtain the valid QPD, if $\sum_{x=1}^{R_G}|c_x|$ diverge for all possible expansion.

\subsection{C. No-go theorem on the circuit knitting of the Gaussian unitary}\label{appendix_nogothm}
Before proving the theorem, we first introduce Lemma \ref{lemma::LB-gamma_G}. This establishes a relation between the optimal sampling overhead and the covariance matrix for a bipartite pure Gaussian state.
\begin{lemma}\label{lemma::LB-gamma_G}
Let $\ketv{\Psi_{ab}}$ be a bi-pirate pure Gaussian state, then we have
\begin{align}
    \gamma_{\mathrm{SEP}}(\ketv{\Psi_{ab}})\geq 2\sqrt{\mathrm{det}(\Lambda_a)}-1.
\end{align}
In which $\Lambda_{a}$ is the covariance matrix of the subsystem a.
\end{lemma}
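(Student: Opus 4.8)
The plan is to connect the sampling overhead for a bipartite pure Gaussian state to its Schmidt coefficients, and then relate those Schmidt coefficients to the covariance matrix of one subsystem. We already have from the text the exact formula $\gamma_{\mathrm{SEP}}(\hat{\Psi}) = 2(\sum_n s_n)^2 - 1$ for a bipartite pure state with Schmidt decomposition $\ketv{\Psi} = \sum_n s_n \ketv{a_n,b_n}$. So the problem reduces entirely to showing $\sum_n s_n \geq (\det\Lambda_a)^{1/4}$, after which substituting into the exact formula gives the claimed bound immediately. The convexity/infimum subtlety in Definition of $\gamma_\mathbb{F}$ does not bite here because the exact value is known.

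The key steps: First, bring $\ketv{\Psi_{ab}}$ to its standard (normal) form. Any bipartite pure Gaussian state can be written, up to local Gaussian unitaries, as a tensor product of two-mode squeezed vacua (possibly with extra vacuum modes on one side); local Gaussian unitaries do not change $\gamma_{\mathrm{SEP}}$ (they are separable operations with unit overhead) and transform $\Lambda_a$ by a symplectic congruence, hence leave $\det\Lambda_a$ invariant. Second, for a single two-mode squeezed vacuum with squeezing parameter $r$, the Schmidt coefficients are $s_n = \sqrt{1-\lambda^2}\,\lambda^n$ with $\lambda = \tanh r$, so $\sum_n s_n = \sqrt{1-\lambda^2}/(1-\lambda) = \sqrt{(1+\lambda)/(1-\lambda)} = e^{r}$ (taking $r\geq 0$); meanwhile the reduced state on mode $a$ is thermal with covariance matrix $\cosh(2r)\,\mathbb{I}_2$, so $\det\Lambda_a = \cosh^2(2r)$ and $(\det\Lambda_a)^{1/4} = \sqrt{\cosh 2r}$. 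Since $e^{r} \geq \sqrt{\cosh 2r} = \sqrt{(e^{2r}+e^{-2r})/2}$ (equivalently $e^{2r} \geq (e^{2r}+e^{-2r})/2$, true for all $r$), the single-mode inequality holds. Third, for the general tensor-product normal form the Schmidt sum factorizes as $\sum_n s_n = \prod_k e^{r_k}$ and $\det\Lambda_a = \prod_k \cosh^2(2r_k)$ (with any extra pure vacuum modes on the $a$ side contributing factor $1$ to both sides), so the product of the single-mode inequalities yields $\sum_n s_n \geq (\det\Lambda_a)^{1/4}$ in general.

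The main obstacle is the reduction to normal form and getting the bookkeeping right, in particular handling the case where the two subsystems have unequal mode numbers: the reduced state $\Lambda_a$ may then have some symplectic eigenvalues equal to $1$ (the "vacuum" directions that carry no entanglement), and one must check these contribute trivially and do not spoil the determinant identity. A secondary point worth a careful line is confirming that local Gaussian unitaries really are free (overhead-$1$, SEP-preserving) operations in the present framework, so that passing to normal form is legitimate; this is standard but should be stated. Everything else is the routine two-mode-squeezed-vacuum computation sketched above.
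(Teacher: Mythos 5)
Your proof is correct, but it takes a genuinely different route from the one in the paper. The paper never passes to the Gaussian normal form: it applies the interpolation (H\"older) inequality $1=\pnorm{s}{2}^{6}\leq\pnorm{s}{4}^{4}\pnorm{s}{1}^{2}$ to the Schmidt spectrum of an \emph{arbitrary} bipartite pure state, which combined with $\gamma_{\mathrm{SEP}}=2\pnorm{s}{1}^{2}-1$ and $\operatorname{Tr}\rho_A^2=\pnorm{s}{4}^{4}$ yields the purity bound $\gamma_{\mathrm{SEP}}\geq 2/\operatorname{Tr}(\rho_A^2)-1$; only at the last step does Gaussianity enter, through the identity $1/\operatorname{Tr}(\rho_A^2)=\sqrt{\det\Lambda_a}$. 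That argument is shorter, needs no decomposition into two-mode squeezed vacua, and sidesteps entirely the bookkeeping you flag as your main obstacle (unequal mode numbers, unit symplectic eigenvalues, freeness of local Gaussian unitaries). Your route, by contrast, buys more: reducing to the normal form $\bigotimes_k\ketv{\Phi(r_k)}$ gives the \emph{exact} value $\gamma_{\mathrm{SEP}}=2\prod_k e^{2r_k}-1$ for pure Gaussian states, of which the determinant bound is a weakening via $e^{2r}\geq\cosh 2r$; indeed the paper's source contains a commented-out earlier proof along exactly these lines (Williamson form, thermal symplectic eigenvalues $\bar N_m+\tfrac12$, and the bound $\prod_m(\sqrt{\bar N_m+1}+\sqrt{\bar N_m})^2\geq\prod_m(2\bar N_m+1)$). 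All of your individual claims check out — $\sum_n s_n=e^{r}$, $\det\Lambda_a=\cosh^2 2r$ in the paper's convention where the vacuum covariance is the identity, multiplicativity over modes, and the triviality of the vacuum directions — so the proposal is a valid, if longer and less general, proof; to make it fully rigorous you would still need to justify the Gaussian normal-form reduction and the invariance of $\gamma_{\mathrm{SEP}}$ under local Gaussian unitaries, which the paper's H\"older argument never has to invoke.
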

\begin{proof}
First, we know that for the pure bipartite state with the Schmidt coefficients $s_i$, we have
\begin{align}
  \frac{1+\gamma_{\mathrm{SEP}}(\ketv{\Psi_{ab}})}{2} = \pnorm{s}{1}^2=(\sum_i s_i)^2.
\end{align}
And the purity of the reduced state is  
\begin{align}
    \trace{\rho_A^2} = \pnorm{s}{4}^4 = \sum_{i}s_i^4.
\end{align}
Then, by adopting Hölder's inequality, it follows that
\begin{align}
    1= \pnorm{s}{2} &\leq \pnorm{s}{4}^4\pnorm{s}{1}^2 = \trace{\rho_A^2}(\frac{1+\gamma_{\mathrm{SEP}}(\ketv{\Psi_{ab}})}{2}) 
    \Rightarrow \gamma_{\mathrm{SEP}}(\ketv{\Psi_{ab}}) \geq \frac{2}{\trace{\rho_A^2}} - 1.
\end{align}
For the Gaussian state, the purity is given by
\begin{align}
    \frac{1}{\trace{\rho_A^2}} = \sqrt{\mathrm{det}(\Lambda_a)}.
\end{align}
Where the covariance matrix is defined as
\begin{align}
    \Lambda_{i,j} := \brav{\Psi_{ab}}\{\hat{R}_i- \langle\hat{R}_i\rangle,\hat{R}_j- \langle\hat{R}_j\rangle\}\ketv{\Psi_{ab}}\text{ with }\; \langle\hat{R}_i\rangle:= \brav{\Psi_{ab}} \hat{R}_i\ketv{\Psi_{ab}}.
\end{align}
So for the bi-partite Gaussian state, we have 
\begin{align}
    \gamma_{\mathrm{SEP}}(\ketv{\Psi_{ab}})\geq 2\sqrt{\mathrm{det}(\Lambda_a)}-1.
\end{align}
\end{proof}
Now we are ready to prove the theorem \ref{thm_no-go thm for GCC}.
Suppose the QPD over $\mathrm{SEP}$ exists for $\hat{U}_S$, which acts on $M_{a}+M_b$ modes.
Then for any $2M_a$ modes state $\ketv{\psi_a}$ and $M_b$ mode state $\ketv{\psi_b}$, it is $\gamma_{\mathrm{SEP}}\left( \hat{U}_S\right)\geq\gamma_{\mathrm{SEP}}\left((\hat{\mathbb{I}}\otimes \hat{U}_S)\ketv{\psi_a,\psi_b}\right)$.
So the optimal sampling overhead for $\hat{U}_S$ is lower bounded by 
\begin{align}
    \gamma_\mathrm{SEP}(\hat{U}_S)\geq \sup_{|\psi_a\rangle,|\psi_b\rangle}\left\{\gamma_{\mathrm{SEP}}\left((\hat{\mathbb{I}}\otimes \hat{U}_S)\ketv{\psi_a,\psi_b}\right)\right\} .
\end{align}
To prove this theorem, we aim to show that this lower bound diverges for all $S\neq S_a\oplus S_b$.

Lets set the two-mode squeezed state as $\ketv{\Phi(r)}=\hat{U}_{B}\ketv{\zeta = r,\zeta = -r}$.
Then we take $\ketv{\Psi_a(r)}$ contain $M_a$ copies of $\ketv{\Phi(r)}$, in which mode m will entangled with mode $m+M_a$, for all $0\leq m< M_a$.
So the covariance matrix of $\ketv{\Psi_a(r)}$ are
\begin{align}
    &\Lambda_{\Psi_a(r)}
    =\begin{pmatrix}
        \cosh 2r&0&-\sinh{2r}&0\\
        0&\cosh{2r}&0&\sinh{2r}\\
        -\sinh{2r}&0&\cosh 2r&0\\
        0&\sinh{2r}&0&\cosh{2r}
    \end{pmatrix}\otimes I_{M_a}. 
\end{align}
Where $I_d$ denotes the d by d identity matrix.
For the subsystem b, we take $\ketv{\Psi_b}=\ketv{0}_N^{\otimes M_b}$.
Then the covalence matrix of $\ketv{\Psi_b}$ is $\Lambda_{\Psi_b} =I_{2M_b}$.

To calculate the optimal sampling overhead, we need the covariance matrix of subsystem b, after a symplectic transform $S$.
We first take the partial trace on the first $M_a$ mode of $\ketv{\Psi_a(r)}$,
so the reduced state $\hat{\rho}_a(r)$ has the covariance matrix
\begin{align}
    \Lambda_{\hat{\rho}_a(r)} = {\cosh{2r}}\mathbb{I}_{2M_a}.
\end{align}
Then by taking $S$ into its block form, i.e.,
\begin{align}
    S = \begin{pmatrix}
        S_{00}&S_{01}\\ S_{10}&S_{11}
    \end{pmatrix} 
\end{align}
The covariance matrix after the symplectic transform $S$ is
\begin{align}
&\Lambda_{\hat{U}_S(\hat{\rho}_a(r)\otimes\proj{\Psi_b})\hat{U}_S^\dag}
={\cosh{2r}}\begin{pmatrix}
      S_{00}S^T_{00} & S_{00}S^T_{10}\\
      S_{10}S^T_{00}&S_{10}S^T_{10}
  \end{pmatrix}+\begin{pmatrix}
      S_{01}S^T_{01} & S_{01}S^T_{11}\\
      S_{11}S^T_{01}& S_{11}S^T_{11}
  \end{pmatrix} \\
\Rightarrow 
&\Lambda_{\mathrm{Tr}_A(\hat{U}_S(\hat{\rho}_a(r)\otimes\proj{\Psi_b})\hat{U}_S^\dag)}
=\cosh{2r}S_{01}S^T_{01} + S_{11}S^T_{11}
\end{align}
So applying Lemma \ref{lemma::LB-gamma_G}, we have 
\begin{align}
    \gamma_\mathrm{SEP}(\hat{U}) 
    &\geq \sup_{|\psi_a\rangle,|\psi_b\rangle} \left\{ \gamma_{\mathrm{SEP}}\left((\hat{\mathbb{I}}\otimes \hat{U}_S)\ketv{\psi_a,\psi_b}\right) \right\} \notag\\
    &\geq \sup_{r\in\mathbb{R}}\left\{ \gamma_{\mathrm{SEP}} \left( (\hat{\mathbb{I}}\otimes \hat{U}_S)\ketv{\Psi_a(r), \Psi_b} \right) \right\} \notag\\
    &\geq 2\sup_{r\in\mathbb{R}} \left\{ \sqrt{\mathrm{det}(\cosh{2r}S_{01}S^T_{01} + S_{11}S^T_{11})} \right\} - 1.
\end{align}
This means, $S_{01}\neq0\Rightarrow \gamma_\mathrm{SEP}(\hat{U}_S)=\infty$.
Also, to make $S$ a valid simplistic transform, we have $S_{01}=0\Leftrightarrow S_{10}=0$.
So $\gamma_\mathrm{SEP}(\hat{U}_S)<\infty$ if and only if $S = S_a\oplus S_b$.
This allows us to conclude that QPD over $\mathrm{SEP}$ does not exist for any non-local Gaussian unitary. 

\subsection{D. Approximate QPD for the Fock state}\label{appendix_Fock}
Recalled that we have the approximate QPD for the single photon state as
\begin{align}
    \hat{\psi}_\epsilon = \frac{1}{1-e^{-\epsilon}}\left(\int_0^{2\pi} \frac{d\theta}{2\pi}\proj{\sqrt{\epsilon} e^{i\theta}} - e^{-\epsilon}\proj{0}_N\right).
\end{align}
Then with the k-copy of $\hat{\psi}_\epsilon$, we can apply the photon bunching as follows:
First, apply the quantum Fourier transform with the beam-splitter, that is, the unitary $\hat{F}_{k}$, which satisfies
\begin{align}
    \hat{F}_{k}\hat{a}^\dag_n\hat{F}_{k}^\dag = \frac{1}{\sqrt{k}}\sum_{m=1}^N e^{\frac{2\pi i nm}{k}} \hat{a}^\dag_m.
\end{align}
After that, we post-select on the other $k-1$ auxiliary modes with no-photon output.
This process can be described by a completed positive trace-non-increasing map as
\begin{align}
    \Tilde{B}_{k}(\hat{\rho}) = \mathrm{Tr}_{Aux}\left(\hat{\mathbb{I}}\otimes \proj{0}_N^{\otimes k-1}\hat{F}_k \hat{\rho} \hat{F}_k^\dag \right).
\end{align}
With the ideal single photon state, it is 
\begin{align}
    \Tilde{B}_{k}(\proj{1}_N^{\otimes k}) = \frac{k!}{k^k}\proj{k}_N.
\end{align}
Hence, we define the approximate K-photon state as
\begin{align}
    \hat{\psi}^{(k)}_\delta :=  \frac{\Tilde{B}_{k}(\hat{\psi}_{\delta}^{\otimes k})}{\mathrm{Tr}\left(\Tilde{B}_{k}(\hat{\psi}_{\delta}^{\otimes k})\right)}
\end{align}
Moreover, the success probability should be bounded by
\begin{align}
    \mathrm{Tr}\left(\Tilde{B}_{k}(\hat{\psi}_{\delta}^{\otimes k})\right)> \frac{\delta^ke^{-k\delta}}{(1-e^{-\delta})^k}\frac{k!}{k^k}
\end{align}
So we can conclude that, 
\begin{align}
    \onorm{\hat{\psi}^{(k)}_\delta - \proj{k}_N }{1} 
    \leq \frac{3(k-ke^{-\delta})^k}{2k!\delta^ke^{-k\delta}}\onorm{\hat{\psi}^{\otimes k}_\delta - \proj{1}_N^{\otimes k} }{1} 
    \leq \frac{3(k-ke^{-\delta})^k}{2k!\delta^ke^{-k\delta}}k\delta := \epsilon_k(\delta).
\end{align}
This is a direct construction on the approximate QPD k-photon state.
Notice that, although we adopt the auxiliary mode in the construction of QPD, since the beam-splitter acting on the classical state does not create any entanglement, we can rewrite the effect of the auxiliary mode as a pre-factor on each term of the decomposition.

\subsection{E. Cat-state amplification}
Recalled that our Bell-like state is defined as
\begin{align}
     |\Phi(\alpha,\theta)\rangle = \frac{1}{\sqrt{2(1+\cos{\theta}e^{-4|\alpha|^2})}}(\ketv{\alpha,\alpha}+e^{i\theta}\ketv{-\alpha, -\alpha} )
\end{align}
Where the QPD of the state $|\Phi(\alpha,\theta)\rangle$ can be constructed form:
\begin{align}
    \proj{\Phi(\alpha,\theta)}\propto &(\proj{\alpha})^{\otimes 2}+(\proj{-\alpha})^{\otimes 2}
    +e^{-i\theta}(|\alpha\rangle\langle-\alpha|)^{\otimes 2}+e^{i\theta}(|-\alpha\rangle\langle\alpha|)^{\otimes 2} 
\end{align}
For the non-diagonal parts (last two terms), we first defined four cat-states as
\begin{align}
    |\psi_n\rangle = \frac{1}{\sqrt{\mathcal{N}_n}}(|\alpha\rangle+e^{i\frac{\theta+n\pi}{2}}|-\alpha\rangle),
\end{align}
with $n\in \mathbb{Z}_4$ and $\mathcal{N}_n = 2+2\cos(\frac{\theta+n\pi}{2})e^{-2|\alpha|^2}$.
Then we may further set
\begin{align}
    \hat{\sigma}_n:=\frac{\mathcal{N}_n}{2}\proj{ {\psi}_n} -\frac{\mathcal{N}_{n+2}}{2}\proj{ {\psi}_{n+2}},
\end{align}
and it follows that
\begin{align}
    \frac{1}{2}(\hat{\sigma}_0^{\otimes 2} - \hat{\sigma}_1^{\otimes2})= e^{-i\theta} (|\alpha \rangle\langle-\alpha|)^{\otimes 2}+e^{i\theta} (|-\alpha\rangle\langle\alpha|)^{\otimes 2} 
\end{align}
which is the desired non-diagonal term.
So the QPD of the state $\hat{\Phi}(\alpha,\theta)$ only contain $|\psi_n\rangle$ and $|\pm\alpha\rangle$.
The corresponding overhead is then
\begin{align}
    \bar{\gamma}(\alpha,\theta) &=\frac{1}{2(1+\cos{\theta}e^{-4|\alpha|^2})}(2+\frac{(\mathcal{N}_0+\mathcal{N}_2)^2+(\mathcal{N}_1+\mathcal{N}_3)^2}{8}) = \frac{3}{1+\cos{\theta}e^{-4|\alpha|^2}}.
\end{align}
\end{document}